\newcommand{\ProtocolX}{Alea-BFT} 
\newcommand{\fillgap}[2]{$\langle \texttt{FILL-GAP}, #1, #2 \rangle$}
\newcommand{\filler}[1]{$\langle \texttt{FILLER}, #1 \rangle$}
\newtheorem{theorem}{Theorem}
\newtheorem{lemma}{Lemma}
\newtheorem{definition}{Definition}
\begin{document}

\date{}

\title{\Large \bf \ProtocolX: Practical Asynchronous Byzantine Fault Tolerance}

\author{
{\rm Afonso Oliveira}\\
IST (ULisboa) and INESC-ID
\and
{\rm Henrique Moniz}\\
Dune Analytics
\and
{\rm Rodrigo Rodrigues}\\
IST (ULisboa) and INESC-ID
} 

\maketitle

\begin{abstract}

Traditional Byzantine Fault Tolerance (BFT) state machine replication protocols assume a partial synchrony model, leading to a design where a leader replica drives the protocol and is replaced after a timeout. Recently, we witnessed a surge of asynchronous BFT protocols that use randomization to remove the assumptions of bounds on message delivery times, making them more resilient to adverse network conditions. However, these protocols still fall short of being practical across a broad range of scenarios due to their cubic communication costs, use of expensive primitives, and overall protocol complexity. In this paper, we present \ProtocolX, the first asynchronous BFT protocol to achieve quadratic communication complexity, allowing it to scale to large networks. \ProtocolX\ brings the key design insight from classical protocols of concentrating part of the work on a single designated replica, and incorporates this principle in a two stage pipelined design, with an efficient broadcast led by the designated replica followed by an inexpensive binary agreement. We evaluated our prototype implementation across 10 sites in 4 continents, and our results show significant scalability gains from the proposed design. 
\end{abstract}

\section{Introduction}\label{sec:intro}

The history of Byzantine fault tolerant (BFT) replication has gone through different stages throughout the years, from the initial exploration of the topic in the 1980s~\cite{Lamport}, then, in the lates 1990s, the start of a series of  practical protocols that achieve good performance~\cite{pbft}, and more recently the real-world adoption of this class of protocols in the context of Blockchains~\cite{hotstuff}.

Whenever a BFT protocol is devised, it needs to inevitably face the FLP impossibility result, which states that it is impossible to achieve consensus (or, equivalently, executing a replicated state machine command) in an asynchronous system with even a single fault~\cite{flp}. For many decades, the almost universally accepted way to circumvent this hurdle was by assuming a partial synchrony model, where the network is assumed to be initially asynchronous, but, after an unknown point in time, to deliver and process messages within a certain time bound~\cite{partialsynch}. This leads to a class of protocol designs where a leader can drive the execution of the protocol, but, after a timeout that indicates that the protocol is not making progress, all replicas cooperate in picking a new leader.

Recently, researchers picked up a different line of research that had been somewhat dormant for many years: asynchronous BFT protocols~\cite{aspnes2003randomized}. These protocols are safe and live irrespectively of any timing assumptions being met, but at the cost of those guarantees being probabilistic, i.e., they are provided with very high probability. Removing these timing assumptions brings the advantage of allowing the protocol to be more resilient against node and network delays, which may be due to reasons ranging from network problems to malicious activity.

The recently proposed algorithms in this model follow mostly from the work of HoneyBadgerBFT~\cite{honeybadger}, usually regarded as the first practical asynchronous BFT protocol and refined in subsequent work~\cite{dumbo}.
While these proposals were successful in showing that asynchronous BFT algorithms have reasonable performance and are resilient to adverse network conditions, they also fall short of being practical across a broad range of scenarios.
In particular, aspects such as their message complexity being cubic in the number of protocol replicas, the extensive use of  threshold encryption, and overall protocol complexity are at odds with deploying these proposals in systems with a large number of replicas.
Such scalability limitations prevent these asynchronous protocols from being actively deployed on emerging permissioned blockchain systems that require the system to scale to hundreds or even thousands of participants, for which all suitable solutions assume some level of synchrony~\cite{hotstuff, kauri}.

In this paper, we present \ProtocolX, the first protocol for asynchronous BFT state machine replication that is practical in real-world scenarios such as wide-area networks or deployments that go beyond a small number of replicas. The main insight in \ProtocolX\ is that it selectively brings a key design feature from classical partially synchronous protocols, namely the idea of having a per-command designated leader replica that drives the protocol execution for that command. By splitting the command execution in two phases, and placing on this replica the responsibility of the initial broadcast phase that disseminates each command, \ProtocolX\ can avoid redundant instances of expensive building blocks of asynchronous protocols, while simultaneously avoiding using threshold cryptography to encrypt proposals replicated across all replicas. However, this also introduces new challenges, namely that there is no guarantee that the broadcast by the leader will reach a sufficient number replicas in time for the subsequent agreement phase. We address this challenge by crafting an agreement phase where replicas either agree on the execution of the command in case the command was seen by enough replicas to reconstruct it if needed, or otherwise the command is locally stored in one of the various queues of pending commands that are eventually pushed through the agreement phase.

The resulting protocol is a significant leap forward in the state of the art of asynchronous BFT protocols, namely through the following characteristics.
\begin{itemize}
    \item It provides optimal resilience for the Byzantine model, tolerating up to $f=\lfloor\frac{N-1}{3}\rfloor$ faulty processes out of $N$ total processes;
    
    \item It is completely asynchronous, meaning that no assumptions are made regarding the delivery schedule of messages by the network, thus ensuring robustness under adversarial network conditions or attacks that break timing assumptions;

    \item It departs from a design where threshold cryptography is employed to encrypt proposals replicated across all the replicas, followed by an asynchronous common subset protocol and a final round of decrypting the results.
Instead, we propose a novel architecture with a two phase pipelined design, based on simple primitives, namely a single broadcast instance followed by a binary agreement.

    \item It provides significant asymptotic improvements over the state of the art protocols in this model. In particular, both the expected message and communication complexities can be reduced by a factor of up to $\mathcal{O}(N)$, while still terminating in constant expected time.
\end{itemize}

We implemented \ProtocolX\ and evaluated it in a wide area deployment, comprising replicas spread across 10 sites in 4 continents. Our results show that \ProtocolX\ brings significant improvements over the state-of-the-art protocols for asynchronous BFT, at the cost of a modest increase in latency due to leader rotation.

The remainder of the paper is organized as follows. \Cref{sec:related-work}  surveys related work. \Cref{sec:basics} describes the system model and key building blocks.
\Cref{sec:protocol} presents the design of \ProtocolX. \Cref{sec:efficiency} analyses its asymptotic complexity. \Cref{sec:impl} describes our implementation, which is evaluated in \Cref{sec:eval}. We conclude in \Cref{sec:conclusion}.

\section{Related Work}\label{sec:related-work}

The Byzantine consensus problem was formulated by Lamport et al.~\cite{Lamport}, and led to a series of proposals for Byzantine fault tolerant replication protocols~\cite{rampart,securering,BQS}. More recently, several proposals appeared that made BFT protocols more efficient, namely avoiding the use of expensive cryptographic signatures in the normal case~\cite{pbft,hq,zyzzyva}. BFT then gained a real-world adoption in the context of cryptocurrencies and blockchains, with several new protocols for that context~\cite{hotstuff}.

From these proposals, the subset that implement a form of consensus~--~namely state machine replication protocols~\cite{schneider1990implementing}~--~are faced with the FLP impossibility result~\cite{flp}, which states that there is no deterministic solution for the consensus problem in an asynchronous system, even with a single crash fault. To circumvent this result, almost all modern BFT systems, rely on timing assumptions such as partial synchrony~\cite{partialsynch} in order to ensure liveness. This is the case, for instance, of systems such as PBFT~\cite{pbft} and also more recent proposals such as HotStuff~\cite{hotstuff} or Kauri~\cite{kauri}.

As shown by Singh et al.~\cite{singh2008bft} through simulations where replicas are correct but the network is unreliable,
protocols in this model are sensitive to variations in message delivery performance, with the throughput dropping to zero under certain conditions, thus highlighting the need for protocols to take the network behavior into account.

As an alternative to protocols for the partially synchronous model, randomized protocols circumvent FLP by only guaranteeing the liveness property with high probability. The design for this class of protocols runs the main algorithm through multiple rounds until its non deterministic nature allows the probability of not having liveness to be irrelevant.
These protocols are then able to operate over a fully asynchronous model, therefore eliminating the need for timing assumptions and the consequences in terms of the fragility of the protocols in the presence of network unreliability.
Despite that fact that these properties make the protocols very interesting from a theoretical standpoint, asynchronous BFT protocols~\cite{rabin1983randomized,ben1994asynchronous,cachin2001secure,moniz2008ritas} have usually been considered impractical due to their high communication costs and expected termination time.
Very recently, several new randomized protocols appeared. At the core of this new line of proposals is the use of an asynchronous binary agreement (ABA) primitive, in which processes decide on the value of a single bit.
These ABA protocols are then used as building blocks for a solution to atomic broadcast and state machine replication. We next describe these new protocols, which form the most closely related work.

HoneyBadgerBFT~\cite{honeybadger} is based on the observation that atomic broadcast can be built on top of an asynchronous common subset (ACS) framework by combining it with a threshold encryption scheme. In ACS every party proposes an input value, and outputs a common vector containing the inputs of at least $N-f$ distinct parties.
As illustrated in Figure~\ref{fig:hb-acs}, HoneyBadgerBFT constructs ACS from the composition of two phases, namely reliable broadcast (RBC) and asynchronous binary agreement (ABA).
\begin{figure}[t]
\begin{center}
\includegraphics[width=\linewidth]{./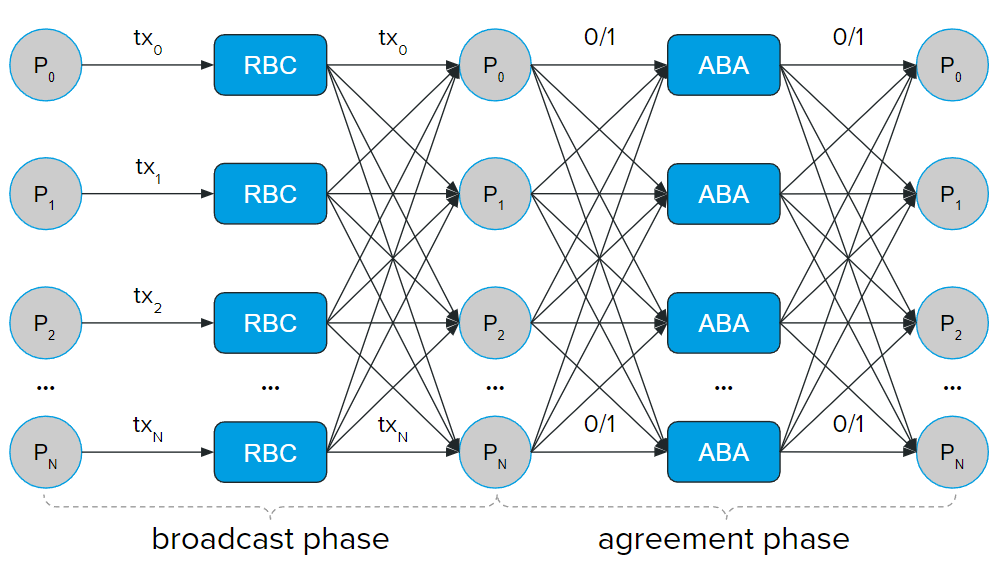}
\end{center}
\caption{The structure of ACS subprotocol used by HoneyBadgerBFT~\cite{honeybadger}. Each of the blue boxes corresponds to an instance of a distributed protocol, whose message exchanges are not depicted.}
\label{fig:hb-acs}
\end{figure}
During the broadcast phase, every replica starts an RBC instance in order to disseminate its proposal to all other replicas. Then, in the agreement phase, $N$ parallel ABA instances are invoked in order to decide on a $N$-bit vector, where the $i$-th value indicates whether or not to include the proposal from replica $P_i$ in the final ACS output. Here, the use of threshold encryption prevents an adversary from selectively censoring transactions, by selecting which proposals to include in the ACS output vector.

HoneyBadgerBFT significantly outperformed previous protocols, which constructed atomic broadcast directly from multi-valued Byzantine agreement (MVBA)~\cite{cachin2001secure}. This then led to the development of new protocols improving on it.
BEAT~\cite{beat} presents a series of techniques aimed at optimizing HoneyBadgerBFT and explores the trade-offs associated with using different broadcast and threshold cryptography sub-protocols.
EPIC~\cite{epic} is an extension to BEAT that provides security under the adaptive corruption model, meaning the adversary can adaptively decide whose replicas to corrupt at any moment during the execution of the protocol.
Finally, the Dumbo~\cite{dumbo} protocols present alternative ACS constructions aimed at reducing the number of ABA executions per command delivered. Dumbo1 reduces this number to a value $k$ independent of $N$ by selecting a committee of aggregators that nominate which subset of proposals to output, while Dumbo2 further reduces it to a constant by instantiating ACS based on a MVBA execution over reduced size inputs.

\section{Basics}\label{sec:basics}
In this section, we present the  system model and precisely define the basic blocks upon which our proposal is built.

\subsection{System model}
We consider a distributed system composed of $N$ processes, uniquely identified from the set $S = \{P_0, ..., P_{N-1}\}$, as well as an arbitrary number of clients.
We assume a Byzantine failure model where up to $f=\lfloor\frac{N-1}{3}\rfloor$ processes can fail during the execution of the protocol. The adversary is given full control over the behavior of a static set of these faulty processes meaning that they can deviate arbitrarily from the protocol specification and even collude among each other in order to subvert the properties of the protocol. The remaining processes that do not fail during protocol execution are referred to as correct.
The system is asynchronous, with the delivery schedule of messages being delegated under adversarial control, and without bounds on communication delays or processing times. We consider the processes to be fully connected by channels providing guarantees that messages are not modified in transit and are eventually delivered. In practice, this requires message retransmission and point to point authentication, but by considering this network model we can omit these from the protocol description.
Lastly, the adversary is assumed to be computationally bound and therefore unable to subvert the cryptographic primitives employed.

\subsection{Specification}

We developed \ProtocolX\ as an atomic broadcast protocol, which is a commonly used abstraction for implementing state machine replication. Intuitively, this allows a process (e.g., a proxy replica) to broadcast a message (e.g., a state machine command) to all processes, ensuring that all processes deliver all messages in the same order (executing all commands in the same order and therefore transitioning through the same sequence of states). Formally, atomic broadcast is defined as follows~\cite{Hadzilacos1994AMA}:
\begin{itemize}[leftmargin=*]
    \item \textbf{Validity:}
    If a correct process broadcasts a message $m$, then some correct process eventually delivers $m$.

    \item \textbf{Agreement:}
    If any correct process delivers a message $m$, then every correct process delivers $m$.

    \item \textbf{Integrity:}
    A message $m$ appears at most once in the delivery sequence of any correct process.

    \item \textbf{Total Order:}
    If two correct processes deliver two messages $m$ and $m'$, then both processes deliver $m$ and $m'$ in the same order.
\end{itemize}

\subsection{Building blocks}
Similarly to other asynchronous replication protocols, \ProtocolX\ is designed in a highly modular way, by reusing several sub-protocols to carry out certain tasks.
In this modular, layered architecture, upper level protocols can provide inputs and receive outputs from sub-protocols at the lower layers of the stack.
Next, we present the precise specification of the underlying primitives that are used as building blocks for the \ProtocolX\ protocol.

\subsubsection{Verifiable Consistent Broadcast Protocol}
Verifiable consistent broadcast (VCBC) is a protocol to deliver a payload message from a distinguished sender to all replicas. It can only provide guarantees that all correct replica processes deliver the broadcast value if the sender is correct; however, it always ensures that no two correct processes deliver conflicting messages.
Additionally it allows any party $P_i$, that has delivered the payload message $m$, to inform another party $P_j$ about the outcome of the broadcast execution, allowing it to deliver $m$ immediately and terminate the corresponding VCBC instance.
More formally, a VCBC protocol ensures the following properties~\cite{cachin2001secure}:
\begin{itemize}[leftmargin=*]
    \item \textbf{Validity:}
    If a correct sender broadcasts $m$, then all correct parties eventually deliver $m$.
    
    \item \textbf{Consistency:}
    If a correct party delivers $m$ and another party delivers $m'$, then $m = m'$.
    
    \item \textbf{Integrity:}
    Every correct party delivers at most one message. Additionally, if the sender if correct, then the message was previously broadcast by it.

    \item \textbf{Verifiability:}
    If a correct party delivers a message $m$, then it can produce a single protocol message $M$ that it may send to other parties such that any correct party that receives $M$ can safely deliver $m$.
    
    \item \textbf{Succinctness:}
    The size of the proof $\sigma$ carried by $M$ is independent of the length of $m$.
\end{itemize}
In \ProtocolX\ we use a VCBC implementation consisting of a slightly modified version of an echo broadcast protocol~\cite{cachin2001secure} using threshold signatures to generate and validate a proof $\sigma$ associated with $M$. This allows the message size to be kept constant, ensuring succinctness. The message complexity of the VCBC protocol used is $\mathcal{O}(N)$ and its communication complexity is $\mathcal{O}(N(|m| + \lambda))$, assuming the size of a threshold signature and share is at most $\lambda$ bits.

\subsubsection{Asynchronous Binary Agreement}
An asynchronous binary agreement (ABA) protocol allows correct processes to agree on the value of a single bit. Each process $P_i$ proposes a binary value $b_i \in \{0, 1\}$ and decides for a common value $b$ from the set of proposals by correct processes. Formally, a binary agreement protocol can be defined by the following properties:
\begin{itemize}[leftmargin=*]
    \item \textbf{Agreement:}
    If any correct process decides $b$ and another correct process decides $b'$, then $b=b'$.
 
    \item \textbf{Termination:}
    Every correct process eventually decides.

    \item \textbf{Validity:}
    If all correct processes propose $b$, then any correct process that decides must decide $b$.
\end{itemize}
Given the FLP impossibility result~\cite{flp}, there is no deterministic algorithm capable of satisfying all the previous proprieties in the asynchronous model of \ProtocolX. A solution to this is problem is resort to a randomized solution that guarantees termination in a probabilistic way. As a result, the termination property is replaced with the following:
\begin{itemize}[leftmargin=*]
    \item \textbf{Termination:}
    The probability that a correct process is undecided after $r$ rounds approaches zero as $r$ approaches infinity.
\end{itemize}
This way, even though the total number of rounds required to reach agreement is unbounded, the probability that the protocol does not terminate converges to zero.

We instantiate this primitive via the Cobalt ABA~\cite{macbrough2018cobalt} protocol, a modified version of the protocol by Mostefaoui et al.~\cite{mostefaoui2014signature} to include a fix for a liveness issue present in the original protocol.
The protocol relies on a common source of randomness, a “common coin”, realized from a threshold signature scheme by signing a unique bit string, corresponding to the name of the coin, and combining the signature shares to generate a random seed~\cite{cachin2005random}.
It provides optimal resilience, $\mathcal{O}(N^2)$ message complexity, $\mathcal{O}(\lambda N^2)$ communication complexity and terminates in $\mathcal{O}(1)$ expected time.

\if 0
\subsubsection{Priority Queue}
A priority queue is a custom data structure for storing elements, sorted according to their priority values.
We refer to each position in a priority queue as a slot, uniquely identified by a priority value associated with it.
Only a single element can ever be inserted in a given slot, even after being removed, as the slot is permanently labeled as used and cannot store another element.
There is a special slot called the head slot, that always points to the lowest priority slot whose value hasn't been removed yet. The pointer to the head slot progresses incrementally, conditioned by the insertion and removal of elements from the queue.
A priority queue exposes the following attributes:
\begin{itemize}[leftmargin=*]
    \item \textbf{id:}
    The unique identifier of the queue (static).
    
    \item \textbf{head:}
    The priority associated with the head slot of the queue (dynamic).
\end{itemize}
Additionally, a priority queue provides an interface for interacting with its contents as described bellow:
\begin{itemize}[leftmargin=*,noitemsep,nolistsep]
    \item \textbf{Enqueue $(v, s)$:}
    Add an element $v$ with a given priority value $s$ to the queue, ignore if the corresponding slot is not empty.
  
    \item \textbf{Dequeue $(v)$:}
    Remove the specified element $v$ from the queue, if it is present.
  
    \item \textbf{Get $(s) \rightarrow \{v, \bot \}$:}
    Retrieve the element $v$ contained in the slot specified by the priority $s$, or $\bot$ if the slot is empty.
  
    \item \textbf{Peek $() \rightarrow \{v, \bot \}$:}
    Retrieve the element $v$ in the head slot of the queue, or $\bot$ if the slot is empty.
\end{itemize}
In \ProtocolX\ we leverage the proprieties of this structure, to mediate the communication between the broadcast and agreement components of the protocol pipeline.
Every replica maintains $N$ priority queues, which are used to store the undelivered proposals pertaining to each replica, according to the priority value assigned to them.

\subsubsection{Queue Mapping Function}
A queue mapping function is a function $F(r)$ that identifies the priority queue, over which the protocol should operate for a given round $r$. Informally it can be thought of as a leader election function, responsible for selecting which replicas pre-ordered proposals do operate over for any given $r$.
This function can be any deterministic mapping from $\mathbb{N}$ to $i \in [0, N[$ as long as, for any given value $r$, there is are values $r' > r$ such that $F(r')$ spans over all elements in $[0, N[$, guaranteeing that a queue is always eventually revisited in a subsequent round.
For our initial implementation of \ProtocolX\ we chose a queue mapping function $F(r) = r\:\%\:N$, which iterates over the priority queues following a round robin distribution.
\fi

\section{\ProtocolX}\label{sec:protocol}

In this section, we present the design of \ProtocolX, a new protocol that substantially improves the performance and scalability of asynchronous BFT replication. In addition to designing and implementing \ProtocolX, we also conducted a full correctness proof, which we present in Appendix~\ref{appendix:correctness}.

\subsection{Motivation}\label{sec:protocol:motivation}
We start by presenting in a systematic way the characteristics of state of the art proposals~\cite{honeybadger,beat,epic,dumbo} that may represent scalability barriers:

\begin{itemize}[leftmargin=*]
    \item \textbf{Message and communication complexity:}
    All the previously cited proposals employ an instance of ACS, where every replica must propose a candidate value out of which a subset is selected to be included in the final output vector. This requires all replicas to execute an all to all communication phase to disseminate their proposals. By itself, this step incurs a cubic message and communication costs due to the $N$ executions of reliable broadcast (RBC).
    
    \item \textbf{Number of ABA instances:}
    Since the agreement protocol (ABA) is randomized, it may take several attempts based on different random values to terminate. Previous work~\cite{dumbo} showed that, despite the expected number of rounds per ABA being constant, the overhead of running multiple ABA instances presents one of the major bottlenecks for existing protocols. The same work shows that it is possible to reduce the number of ABA executions to a constant value at the cost of extra communication steps, but we argue that this value could be further reduced to a single execution.

    \item \textbf{Bandwidth usage:}
    The strategy of sequentially executing independent ACS instances implies that some of the broadcast instances were effectively "wasted" as their values were not included in the final output and therefore must be broadcast again in a subsequent ACS execution. This results in unnecessary bandwidth usage.

    \item \textbf{Byzantine performance faults:}
    The adversary has absolute control over which replicas proposals are included in the final output, only constrained by the validity property of ACS. This means that up to $f$ proposals that originate from Byzantine replicas can be invalid, resulting in serious performance degradation.

    \item \textbf{Threshold encryption:}
    A reduction from ACS to atomic broadcast requires threshold encryption for the entire set of proposals, to ensure fairness. This also adds an extra all to all communication step, for broadcasting decryption shares for the proposals included in the ACS output vector.
\end{itemize}

\subsection{Overview and Intuition}\label{sec:protocol:overview}
Despite these drawbacks, asynchronous protocols based on an ACS framework showed that there is potential for asynchronous protocols to become practical, if we overcome the above limitations. 
A key idea that we leverage in  \ProtocolX\ to achieve this is to have a single replica propose a value per consensus instance, similarly to what happens in leader based protocols in the partially synchronous model, while all others simply agree on whether to deliver it or not. By itself, this enables us to both remove an all to all communication phase and only have a single ABA execution per command (or batch of commands).

\bigskip
\noindent \textbf{Strawman Proposal.}
This insight leads us to a strawman proposal, consisting of adapting the ACS construction of HoneyBadgerBFT, but instead of having all replicas simultaneously propose candidate values, a single replica is selected as the proposer for each consensus round. The role of the proposer is to select from its  buffer of pending commands a value (or batch of values) to serve as a proposal and broadcast it to all replicas, using a broadcast primitive that ensures that all replicas receive the same value (or no value at all).
Correct replicas would then proceed to execute a \textit{single} ABA to determine whether to deliver the proposed value for that round (if enough replicas have received it) or not deliver anything.
Additionally, the proposer could be deterministically rotated upon every ABA execution, in order to address the scenario where the proposer is faulty without introducing a fail-over sub protocol, similarly to what happens in other BFT protocols for the partially synchronous model~\cite{veronese2009spin} that incorporate leader rotation into the normal operation such that it is constantly changing.

This strawman protocol, however, raises an immediate problem. In previous protocols based on an ACS framework, replicas are guaranteed to receive proposals from at least $N-f$ correct replicas, and therefore they can wait until this threshold is met before deciding which values to input for the subsequent agreement stage. In contrast, in our strawman protocol, only a single replica takes the role of the proposer at any given time, so there is no way to determine whether the current proposer is faulty or not, thus making it difficult to decide which value to input into the ABA without resorting to some sort of timeout, which contradicts the asynchronous model.

\bigskip
\noindent \textbf{Final design.}
The impossibility of waiting for a specific threshold to be met before deciding the value to input to the ABA stage, leads us to the insight of not waiting at all, and instead allowing undelivered proposals to exist, which are then carried across rounds. In other words, every time a particular replica is reelected as the proposer, the corresponding ABA execution will decide over its backlog of pending proposals instead of a single newly proposed value (or batch of values).
This way, replicas can submit their input to start the ABA for a new round as soon as they conclude the previous round, since even if the decision is 0 (i.e., not deliver any proposal in the round), the same proposal will be eventually revisited when the same replica becomes the leader and a larger threshold of replicas become aware of the proposal, guaranteeing a convergence for an ABA decision of 1 over time. The ABA execution also serves a synchronization mechanism between replicas, since no replica can progress to a round until it has participated and terminated all ABA instances for previous rounds.

In \ProtocolX, we leverage this idea to decompose the monolithic architecture of previous ACS-based protocols, in which a binary agreement instance actively waits for the corresponding broadcast to terminate, into a two stage pipeline, where the results of the first phase (broadcast component) are queued to be eventually processed, either by the current or by a subsequent execution of the second phase (agreement component), which is being executed in parallel.
\begin{figure}[t]
\begin{center}
\includegraphics[width=\linewidth]{./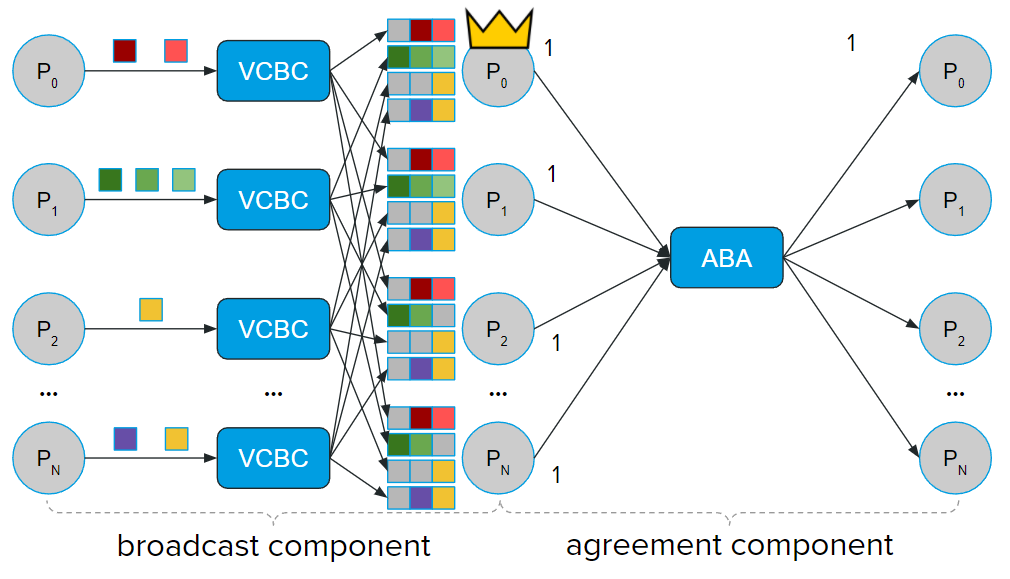}
\end{center}
\caption{Overview of \ProtocolX. Each request goes through a single instance of a broadcast primitive (VCBC), whose output is inserted in a priority queue structure at each replica, which determines the input to the final ABA protocol instance.}
\label{fig:alea}
\end{figure}

The resulting overall protocol flow is depicted in Figure~\ref{fig:alea}. It starts with the the broadcast component of the \ProtocolX\ pipeline, where replicas receive client commands, store these in a pending buffer of size $B$, and, when the buffer is full, disseminate its contents via a VCBC primitive tagged with an incremental sequence number $s$. The output of VCBC at each replica is stored in a buffer, and only removed upon a decision of 1 in the subsequent phase. 
This stage produces an instance of an ordered backlog of undelivered proposals at each replica. 
These instances are then used as input to the next component of the pipeline. Note that every replica maintains $N$ backlogs of undelivered proposals, one for each replica in the system, and these grow and shrink over time depending on how efficiently the agreement component can process them.

The next stage is the agreement component, which iteratively selects one of the backlogs and decides whether to deliver the oldest proposal contained in it.
To do so, replicas participate in a single ABA execution, voting $1$ if their backlog contains this proposal, or $0$ otherwise.
If the decision is $1$, this indicates that a sufficient threshold of correct replicas are aware of the proposal and may safely deliver it, as all other replicas are guaranteed to be able to actively fetch it if needed, via a recovery mechanism.
Otherwise, in case of a decision for $0$, the agreement component simply moves on to the next backlog, repeating the same process all over again.

As we mentioned, since the VCBC primitive of the broadcast component does not guarantee termination, we need to address the scenario where a correct process ouputs an ABA decision for 1, but does not yet know the corresponding proposal.
In this scenario, that correct process requests the missing proposal from the other processes that voted 1. This is guaranteed to work for the following reason: since ABA decided a value of 1, this implies that  at least one correct process voted for 1, and is therefore this process is able to compute a VCBC proof for the pending proposal and forward it to the requesting process.

\subsection{Description}
Now we explain the \ProtocolX\ protocol in greater detail.
Processes maintain two state variables shared between the two components of the pipeline: variable $S_i$, consisting of the set of all messages delivered by the protocol, which is initialized as empty upon a call to the \textproc{Start} procedure, and updated during the execution of the agreement component; and variable $\texttt{queues}_i$, comprising an array of $N$ priority queues, each corresponding to a distinct replica $P_x, \forall x \in \{0, ..., N-1\}$.
Algorithm~\ref{algo:alea:start} is responsible for initializing the shared state variables and starting the pipeline components upon a to call to the \textproc{Start} procedure. In the remainder of this section, we start by specifying the data structure of the priority queues, and then describe the two components of the pipeline in turn.

\begin{algorithm}[h]
\begin{algorithmic}[1]
\footnotesize

\algdef{SN}[constants]{Constants}{EndConstants}
{\textbf{constants:}}

\algdef{SN}[variables]{Variables}{EndVariables}
{\textbf{state variables:}}

\Statex
\Constants
    \State $N$ 
    \State $f$ 
\EndConstants

\Statex
\Variables
    \State $S_i \gets \emptyset$ 
    \label{algo:vars:s}
    \State $queues_i \gets \emptyset$ 
    \label{algo:alea:vars:queues}
\EndVariables

\Statex
\Procedure{START}{}
    \State $queues_i[x] \gets$ \textbf{new} pQueue() $,  \forall x \in \{0, ..., N-1\}$
    \State \textbf{async} \texttt{BC-START}() 
    \State \textbf{async} \texttt{AC-START}() 
\EndProcedure
\end{algorithmic}
\caption{\ProtocolX\ - Initialization (at $P_i$)}
\label{algo:alea:start}
\end{algorithm}

\subsubsection{Priority queues}
A priority queue is a custom data structure for storing elements, sorted according to their priority values.
We refer to each position in a priority queue as a slot, uniquely identified by a priority value associated with it, where the lower-numbered priority values represent the elements that must be processed first. Only a single element can ever be inserted in a given slot, even after being removed, as the slot is permanently labeled as used and cannot store another element.
There is a special slot called the head slot, that always points to the lowest priority slot whose value has not been removed yet. The pointer to the head slot progresses incrementally, conditioned by the insertion and removal of elements from the queue.
A priority queue exposes the following attributes:
\begin{itemize}[leftmargin=*]
    \item \textbf{id:}
    The unique identifier of the queue (static).
    
    \item \textbf{head:}
    The priority value associated with the head slot of the queue (dynamic).
\end{itemize}
Additionally, a priority queue provides an interface for accessing and modifying its contents as described below:
\begin{itemize}[leftmargin=*]
    \item \textbf{Enqueue $(v, s)$:}
    Add an element $v$ with a given priority value $s$ to the queue (ignored if the corresponding slot is not empty).
  
    \item \textbf{Dequeue $(v)$:}
    Removes all instances of the specified element $v$ from the queue, if it is present.
  
  
    \item \textbf{Peek $() \rightarrow \{v, \bot \}$:}
    Retrieve, without removing, the element $v$ in the head slot of the queue, or $\bot$ if the slot is empty.
\end{itemize}
As we will see, \ProtocolX\ leverages the proprieties of this structure to mediate the communication between the broadcast and agreement components of the protocol pipeline. In particular, each of the $N$ priority queues maintains the undelivered proposals coming from the other replicas, ordered by the priority value assigned to those proposals.

\subsubsection{Broadcast Component}
The broadcast component is responsible for establishing an initial local order over the client updates received and propagating that order to other replicas.
Every replica process maintains a two local state variables, a buffer of pending commands $buf_i$, and an integer value $priority_i$, indicating the next sequence number it should assigned to a proposal.
The main logic of this component, illustrated in Algorithm~\ref{algo:alea:bc}, is split between two upon rules:

\bigskip
\noindent \textbf{Upon rule 1 (\crefrange{algo:bc:upon:input}{algo:bc:upon:input:six}): }
The first rule is triggered, by any correct process $P_i$, upon the reception of a client message $m$ to be totally ordered by the protocol. It is responsible for selecting a batch of $B$ transactions from the pending buffer, attributing a local sequence number to it and broadcasting this pre-ordered proposal to all replicas. Process $P_i$ then proceeds as follows:
\begin{itemize}[leftmargin=*]
    \item
    If the set of delivered messages $S_i$ does not contain the client message $m$, append it to the buffer $buf_i$, or ignore it otherwise (\crefrange{algo:bc:upon:input:one}{algo:bc:upon:input:two}). 

    \item
    If the size of the buffer reached a threshold $B$, input $buf_i$ into a VCBC instance tagged with \textit{ID} $(i, priority_i)$ indicating that process $P_i$ assigned the local priority value $priority_i$ to a proposal consisting of the current buffer contents (\crefrange{algo:bc:upon:input:three}{algo:bc:upon:input:four}).
    
    \item Increment the value of $priority_i$, so that it can be assigned to the next proposal from $P_i$ and clear the buffer. (\crefrange{algo:bc:upon:input:five}{algo:bc:upon:input:six}).
\end{itemize}

\noindent \textbf{Upon rule 2 (\crefrange{algo:bc:upon:deliver}{algo:bc:upon:deliver:four}): }
The second rule is triggered, by any correct process $P_i$, upon the delivery of a proposal $m$ for a given VCBC instance tagged with \textit{ID} $(j, priority_j)$, where $j$ corresponds to the identifier of the replica $P_j$ that proposed $m$, and $priority_j$ to the sequence number assigned to it by $P_j$. Process $P_i$ proceeds as follows:
\begin{itemize}[leftmargin=*]
    \item
    Insert the delivered proposal $m$ into the slot $priority_j$ of the priority queue $Q_j$, mapping to $P_j$. This corresponds to adding the $m$ to the backlog of $P_j$ in $P_i$, with the priority value $priority_j$ (\crefrange{algo:bc:upon:deliver:one}{algo:bc:upon:deliver:two}).

    \item
    If the set $S_i$ contains $m$, indicating that it had already been delivered, then process $P_i$ immediately removes it from $Q_j$ in order to avoid ordering duplicate messages, effectively deleting $m$ and progressing through the backlog (\crefrange{algo:bc:upon:deliver:three}{algo:bc:upon:deliver:four}).
\end{itemize}

\noindent
Our protocol relies on batching to improve on performance and bandwidth utilization. In particular, in the implementation described in~\Cref{algo:alea:bc}, replicas aggregate a fixed number of commands, configurable by $B$, before assigning a priority value to the batch and forwarding to other replicas.

\begin{algorithm}[h]
\begin{algorithmic}[1]
\algdef{SN}[constants]{Constants}{EndConstants}
{\textbf{constants:}}

\algdef{SN}[variables]{Variables}{EndVariables}
{\textbf{state variables:}}

\algdef{SN}[upon]{Upon}{EndUpon}
    [1][]{\textbf{upon} #1 \textbf{do}}

\Statex
\Constants\label{bc:const}
    \State $B$
\EndConstants\label{bc:const-vars}

\Statex
\Variables\label{bc:vars}
    \State $buf_i$\label{algo:bc:vars:buff}
    \State $priority_i$ \label{algo:bc:vars:priority}
\EndVariables\label{bc:end-vars}

\Statex
\Procedure{BC-START}{}
    \State $buf_i \gets \emptyset$
    \State $priority_i \gets 0$
\EndProcedure

\Statex
\Upon[receiving a message $m$, from a client] \label{algo:bc:upon:input}
    \If{$m \notin S_i$} \label{algo:bc:upon:input:one}
        \State $buf_i \gets buf_i \cup \{m\}$ \label{algo:bc:upon:input:two}
        \If{$|buf_i| = B$} \label{algo:bc:upon:input:three}
            \State \textbf{input} $buf_i$ to \texttt{VCBC} $(i,  priority_i)$ \label{algo:bc:upon:input:four}
            \State $buf_i \gets \emptyset$ \label{algo:bc:upon:input:five}
            \State $priority_i \gets priority_i+1$ \label{algo:bc:upon:input:six}
        \EndIf
    \EndIf
\EndUpon\label{bc:end-update}

\Statex
\Upon[outputting $m$ for \texttt{VCBC} $(j,  priority_j)$]\label{algo:bc:upon:deliver}
    \State $Q_j \gets queues_i[j]$ \label{algo:bc:upon:deliver:one}
    \State $Q_j.Enqueue(priority_j, m)$ \label{algo:bc:upon:deliver:two}
    \If{$m \in S_i$} \label{algo:bc:upon:deliver:three}
        \State $Q_j.Dequeue(m)$ \label{algo:bc:upon:deliver:four}
    \EndIf
\EndUpon\label{bc:end-deliver}
\end{algorithmic}
\caption{\ProtocolX\ - Broadcast Component (at $P_i$)}
\label{algo:alea:bc}
\end{algorithm}

\subsubsection{Agreement Component}
The agreement component, presented in Algorithm~\ref{algo:alea:ac}, is responsible for establishing a total order among client commands. This is done through a succession of agreement rounds that iterate through the various priority queues, and decide whether to insert the head of that queue in the total order or skip it.
Processes maintain a single state variable $r_i$, serving as a unique identifier for the current agreement round. The execution of the agreement component starts with a call to the \textproc{AC-Start} procedure (line~\ref{algo:ac:start}), which initializes the local variable $r_i$ to $0$ and begins executing the agreement loop.



\bigskip
\noindent \textbf{Agreement loop (\crefrange{algo:ac:start:loop:start}{algo:ac:start:loop:end}): }
For each iteration $r_i$ of the agreement loop the backlog of proposals pertaining to a certain replica is selected.
This replica is a designated round leader, chosen through a deterministic function of the round number $F$ (e.g., by rotating through all replicas).
Let $P_a$ denote the current round leader, such that all replicas operate over the priority queue $Q_a$ for $r_i$.
A correct process $P_i$ proceeds as follows:
\begin{itemize}[leftmargin=*]
    
    \item
    Run an ABA instance tagged with ID $(r_i)$ to determine whether the $value$ in the head slot of $Q_a$ should be delivered for this round.
    Process $P_i$, inputs $1$ into the ABA if its local $Q_a$ contained a $value$ in the head slot, or $0$ otherwise (\crefrange{algo:alea:ac:start:loop:one}{algo:alea:ac:start:loop:four}).
    
    \item
    If the ABA execution decided for $0$, indicating that no proposal should be delivered for the current round $r_i$, simply proceed to the next loop iteration, otherwise:
    \begin{itemize}
        \item
        If process $P_i$ input $0$ into the ABA send a \texttt{FILL-GAP} message to all processes that voted for $1$. This is required because at this point in time $P_i$ is unaware of the value to deliver for $r_i$ and therefore must request it from another process
        (\crefrange{algo:ac:start:loop:fg:start}{algo:ac:start:loop:fg:end}).
        
        \item
        Block execution until the head slot of $Q_a$ contains a value to be delivered via a call to the \textproc{AC-Deliver} procedure. The value of the head slot can be updated by the delivery by a pending VCBC instance, either through "normal" execution or as result of the reception of a \texttt{FILLER} message (\cref{algo:ac:start:loop:block}).
    \end{itemize}
\end{itemize}

\noindent
In addition to the main agreement loop, the agreement component also defines two upon rules, associated with the recovery sub-protocol, to handle the reception of valid \texttt{FILL-GAP} and \texttt{FILLER} messages:

\bigskip
\noindent \textbf{Upon rule 1 (\crefrange{algo:ac:upon:fill}{algo:ac:upon:fill:five}):}
The first rule is triggered, by any correct process $P_i$, upon the reception of a valid \fillgap{q}{s} message from $P_j$, where $q$ identifies a priority queue $Q_q$, and $s$ specifies the current head slot of $Q_q$ in $P_j$. Process $P_i$ then proceeds as follows:
\begin{itemize}[leftmargin=*]
    \item
    Check if its local backlog pertaining to $P_q$ is more advanced than the one of $P_j$, by comparing the head pointer of its $Q_q$ against $s$ (\cref{algo:ac:upon:fill:three}). If it's lower this indicates that $P_i$ cannot satisfy the \texttt{FILL-GAP} request thus ignoring it, otherwise:
    
    \begin{itemize}
        \item
        Compute and store in $entries$ a verifiable message $M$, for all VCBC instances originating from $P_q$ tagged with a priority compromised between the value $s$ requested by $P_j$ and current head slot of $Q_q$ in $P_i$ (\cref{algo:ac:upon:fill:four}).
        
        \item
        Send a \texttt{FILLER} message to $P_j$ containing all the VCBC verifiable messages $M$, computed in the previous step (\cref{algo:ac:upon:fill:five}).
    \end{itemize}
\end{itemize}

\noindent \textbf{Upon rule 2 (\crefrange{algo:ac:upon:filler}{algo:ac:upon:filler:two}):}
The second rule is triggered, by any correct process $P_i$, upon the reception of a valid \filler{entries} message. This message is received as a response to a \texttt{FILL-GAP} request and contains the required information necessary for $P_i$ to progress in the execution of the protocol, by completing pending VCBC instances, after blocking in line~\ref{algo:ac:start:loop:block}. Process $P_i$ proceeds as follows:
\begin{itemize}[leftmargin=*]
    \item Deliver all $M$ messages in $entries$ to the corresponding VCBC instances.
    Note that the verifiability property of VCBC ensures that it immediately terminates upon the reception of $M$, therefore triggering the second upon rule of the broadcast component.
\end{itemize}

\noindent
Finally the \textproc{AC-Deliver} procedure (line~\ref{algo:ac:deliver}), called during the execution of the agreement loop, is responsible for delivering the contents of $values$, a batch of totally ordered messages $m$, to the application layer (line~\ref{algo:ac:deliver:app}).
Additionally, this procedure also removes $value$ from all priority queues and appends its contents to the set of delivered requests $S$.

\begin{algorithm}[h]
\begin{algorithmic}[1]
\footnotesize

\algdef{SN}[variables]{Variables}{EndVariables}
{\textbf{state variables:}}

\algdef{SN}[wait]{Wait}{EndWait}
    [1][]{\textbf{wait until} #1 \textbf{then}}
    
\algdef{SE}[DOWHILE]{Do}{doWhile}
    {\algorithmicdo}[1]{\algorithmicwhile\ #1}%

\algdef{SN}[for each]{ForEach}{EndFor}
    [1][]{\textbf{for each} #1 \textbf{do}}

\algdef{SN}[upon]{Upon}{EndUpon}
    [1][]{\textbf{upon} #1 \textbf{do}}

\Statex
\Variables\label{ac:vars}
    \State $r_i$ \label{ac:vars:r}
\EndVariables\label{ac:end-vars}

\Statex
\Procedure{AC-START}{}\label{algo:ac:start}
    \State $r_i \gets 0$
    \While {true}\label{algo:ac:start:loop:start}
        \State $Q \gets queues_i[F(r_i)]$\label{algo:alea:ac:start:loop:one}
        \State $value \gets Q.Peek()$\label{algo:alea:ac:start:loop:two}
        \State $proposal \gets value \neq \bot$ ? $1 : 0$ \label{algo:alea:ac:start:loop:three}
        \State \textbf{input} $proposal$ to \texttt{ABA} $(r_i)$ \label{algo:alea:ac:start:loop:four}
        \Wait[\texttt{ABA} $(r_i)$ delivers $b$]
            \If{$b = 1$}
                \If{$Q.Peek() = \bot$}\label{algo:ac:start:loop:}\label{algo:ac:start:loop:fg:start}
                    \State broadcast $\langle \texttt{FILL-GAP}, Q.id, Q.head \rangle$\label{algo:ac:start:loop:fg:end}
                \EndIf
                \Wait [$(value \gets Q.Peek()) \neq \bot$]\label{algo:ac:start:loop:block}
                    \State \textproc{ac-deliver}$(value)$
                \EndWait
            \EndIf
        \EndWait
    \State $r_i \gets r_i+1$
    \EndWhile\label{algo:ac:start:loop:end}
\EndProcedure\label{end-procedure:start}

\Statex
\Upon[receiving a valid $\langle \texttt{FILL-GAP}, q, s \rangle$ message from $P_j$] \label{algo:ac:upon:fill}
     \State $Q \gets queues_i[q]$\label{algo:ac:upon:fill:one}
     \If{$Q.head \geq s$}\label{algo:ac:upon:fill:three}
        \State $entries \gets $\texttt{VCBC}$(queue, s').M$ $\forall_{s'} \in [s, Q.head]$ \label{algo:ac:upon:fill:four}
        \State \textbf{send} $\langle \texttt{FILLER}, entries \rangle$ to $P_j$\label{algo:ac:upon:fill:five}
     \EndIf
\EndUpon

\Statex
\Upon[delivering a valid $\langle \texttt{FILLER}, entries \rangle$ message] \label{algo:ac:upon:filler}
    \ForEach [message $M \in entries$]\label{algo:ac:upon:filler:one}
        \State \textbf{deliver} $M$ to the corresponding \texttt{VCBC}\label{algo:ac:upon:filler:two}
    \EndFor
\EndUpon

\Statex
\Procedure{ac-deliver}{value}\label{algo:ac:deliver}
    \ForEach [$Q \in queues_i$]
        \State $Q.Dequeue(value)$
    \EndFor
    \ForEach [$m \in value$]
        \If{$m \notin S_i$}
            \State $S_i \gets S_i \cup \{m\}$
            \State \textbf{output} $m$ \label{algo:ac:deliver:app}
        \EndIf
    \EndFor
\EndProcedure
\end{algorithmic}
\caption{\ProtocolX\ - Agreement Component (at $P_i$)}
\label{algo:alea:ac}
\end{algorithm}

\section{Analysis}\label{sec:efficiency}

In this section we analyse the asymptotic efficiency of the \ProtocolX\ protocol, according to time, message and communication complexity metrics. The results of this analysis are summarized in Table~\ref{table:alea-complexity}.

To analyze \ProtocolX\ we observe that, for every particular proposal payload to be delivered, message exchanges occur in three different places.
First, during the execution of the broadcast component, a replica initiates a VCBC instance to disseminate the locally ordered proposal to all replicas, which is then queued in a priority queue slot according to the priority value assigned to it.
Second, all replicas participate in successive ABA executions, to decide whether or not to deliver the proposal in a particular slot. Here, we denote by $\sigma$ the average number of ABA instances executed over a single slot before it decides for $1$ and its contents are scheduled for delivery.
Finally, a fallback sub-protocol is triggered by replicas that did not VCBC-deliver the proposal before the corresponding ABA execution decided for $1$, in order to actively fetch it from the other replicas.

\subsection{Time Complexity}
Time complexity is defined as the expected number of communication rounds before a protocol terminates, or, when we consider a continuously running protocol such as atomic broadcast, from a command proposal to its output. In the case of \ProtocolX, the first and third steps terminate in constant time $\mathcal{O}(1)$, whereas the total number of rounds required for the agreement component to decide depend on the value of $\sigma$, therefore bounding the overall time complexity of \ProtocolX\ as $\mathcal{O}(\sigma)$.

\subsection{Message Complexity}
Message complexity is the expected number of messages generated by correct replicas during the execution of the protocol.
In \ProtocolX, the VCBC instance from the broadcast phase generates $\mathcal{O}(N)$ messages; then, every ABA instance exchanges $\mathcal{O}(N^2)$ messages; and finally the third recovery phase incurs an overhead of $\mathcal{O}(N)$ messages per replica that triggers this fallback protocol.
Hence, the message complexity of \ProtocolX\ is $\mathcal{O}(\sigma N^2)$, due to the $\sigma$ ABA instances that are executed per priority queue slot prior to delivery.

\subsection{Communication Complexity}
Communication complexity consists of the expected total bit-length of messages generated by correct replicas during the protocol execution.
Let $|m|$ correspond to the average proposal size and $\lambda$ the size of a threshold signature share.
The execution of VCBC incurs a communication complexity of $\mathcal{O}(N(|m|+\lambda))$, each ABA instance requires correct nodes to exchange $\mathcal{O}(\lambda N^2)$ bits, and finally each replica that triggers the recovery phase adds an additional communication cost of $\mathcal{O}(N(|m|+\lambda))$ bits.
This results in an expected total communication complexity of $\mathcal{O}(N^2(|m| + \sigma\lambda))$, due to the $\sigma$ ABA executions and up to $N$ recovery round triggers.

\begin{table}[htb]
\centering
\normalsize
\caption{Complexity of \ProtocolX\ decomposed by stages.}
\label{table:alea-complexity}
{
    \begin{tabular}{ | c | c | c | c | } \hline
    \textbf{Stage}  & Message                   & Communication                                       & Time                  \\ \hline
    Broadcast       & $\mathcal{O}(N)$          & $\mathcal{O}(N(|m|+\lambda))$                       & $\mathcal{O}(1)$      \\ \hline
    Agreement       & $\mathcal{O}(\sigma N^2)$ & $\mathcal{O}(\sigma\lambda N^2)$                    & $\mathcal{O}(\sigma)$ \\ \hline
    Recovery        & $\mathcal{O}(N^2)$        & $\mathcal{O}(N^2(|m|+\lambda))$                     & $\mathcal{O}(1)$      \\ \hline
    \textbf{Total}  & $\mathcal{O}(\sigma N^2)$ & $\mathcal{O}(N^2(|m| + \sigma\lambda))$             & $\mathcal{O}(\sigma)$ \\ \hline
    \end{tabular}
    }
\end{table}

\subsection{Estimating $\sigma$}
As previously mentioned, \ProtocolX\ does not guarantee a constant time execution, which ultimately could negatively affect the protocol latency. In particular, this is because multiple zero-deciding ABA instances could be executed over the same priority queue slot until its contents are considered to be totally ordered.
However, we argue that, despite being theoretically unbounded, the value of $\sigma$ (the number of ABA instances required for a decision) is in practice a very small constant, which ultimately is very close to the optimal value of 1. This statement is justified by the observation that, given a round-robin queue mapping function $F$, the same queue is revisited every $N$ epochs, meaning that $N-1$ other sequential ABA instances must have been executed by the time a particular queue is revisited.
Considering the validity property of ABA, which states that the decided value must have been proposed by a correct process, then the termination of a VCBC instance by $N-f$ correct replicas guarantees that the next ABA execution pertaining to it will decide for $1$.
Therefore, for the value of $\sigma$ to increase by a single unit, correct replicas would, on average, have to complete $N$ sequential ABA executions for every single VCBC instance, corresponding to the number of rounds after which a queue is revisited, an unlikely scenario given that ABA is a randomized protocol and consequently more expensive than a single deterministic execution of VCBC.
In Section~\ref{sec:eval}, we present practical results that further corroborate the hypothesis that the expected value of $\sigma$ is a small constant close to 1.

\section{Implementation}\label{sec:impl}
We implemented a prototype version of \ProtocolX, consisting of 5,000 lines of Java code.
The source code is organized in  a modular manner, with the main \ProtocolX\ outer protocol leveraging sub protocols (namely broadcast and binary agreement) as building blocks. Reliable point-to-point links were implemented using TCP streams, authenticated using HMAC with SHA-256 and 32 Byte keys shared by each replica pair.
Additionally, we implemented HoneyBadgerBFT and Dumbo1/2, which were used as comparison baselines for our experimental evaluation.

\section{Evaluation}\label{sec:eval}
In this section, we present the experimental evaluation of \ProtocolX.
We are interested in comparing the overall performance of \ProtocolX\ against state-of-the-art protocols in the same model for varying system sizes, loads and fault scenarios.
Additionally, we attempt to more precisely determine the analytical complexity of \ProtocolX\ by measuring the value taken by $\sigma$ under realistic deployment scenarios.

\subsection{Experimental setup}
For our experiments, we deployed \ProtocolX, Dumbo1/2 and HoneyBadgerBFT on $N=4$, $8$, $16$, $32$, $64$ and $128$ Amazon EC2 t2.medium instances, uniformly distributed across $10$ different regions (Paris, London, Frankfurt, Singapore, Tokyo, Mumbai, California, Virginia, Central Canada, and S\~{a}o Paulo) therefore spanning four continents. Each instance was equipped with $2$ virtual CPUs, $4$GB of memory and running Amazon Linux $2$.
We split our experiments in test groups based on the system scale $N$ and a varying batch size $B$ ranging from $4$ up to $10^6$ transactions. We used a fixed transaction size of $250$ bytes across all our experiments.
As baselines, we used HoneyBadgerBFT and the Dumbo family of protocols, thus capturing both the initial proposal in this new generation of asynchronous BFT protocols, and the state of the art proposals that were shown to outperform HoneyBadgerBFT.
We excluded the BEAT protocols from our experimental evaluation, as their work keeps the structure of HoneyBadgerBFT intact and most of the methods presented in it are orthogonal and compatible with the other ACS based protocols.

\subsection{Measuring $\sigma$}
In Section~\ref{sec:efficiency}, we presented a theoretical analysis of the complexity of \ProtocolX. The analysis showed that all complexity metrics are dependent on the value of a variable $\sigma$, corresponding to average the number of ABA executions per delivered proposal, which is theoretically unbounded.
In an attempt to quantify the actual value of $\sigma$ under realistic network conditions, we measured the number of messages generated by correct processes during protocol execution for different system sizes using a batch size of $1,000$ transactions, and in a scenario where all replicas constantly have requests to be executed, thus maximizing the load on the system.

Figure~\ref{fig:eval:sigma} compares the average number of messages generated by each correct process during the execution of \ProtocolX, in order to deliver a single proposal, against an analytical computation of this value for different $\sigma$ values.
As we can see, the experimental measurements follow very closely the computed values for $\sigma = 1$, independently of the system scale.
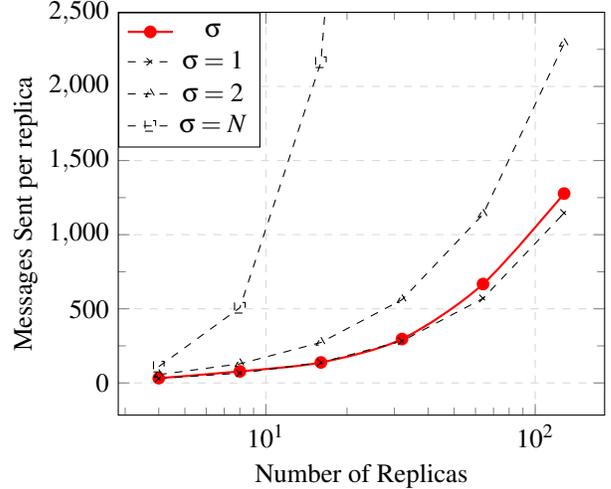
\begin{figure}[t]
\begin{center}
\begin{tikzpicture}
            \begin{axis}[
                width=.95\linewidth,
                grid = major,
                grid style = {dashed, gray!30},
                xmin = 0,
                ymax=2500,
                xmode=log,
                xlabel = {Number of Replicas},
                ylabel = {Messages Sent per replica},
                minor y tick num = 1,
                legend style={at={(0,1)}, anchor=north west}]
                
                \addplot
                [color=red,mark=*,smooth,thick] table[x=n,y=alea,col sep=comma] {Data/message-complexity.csv};
                \addlegendentry{$\sigma$}
                
                \addplot
                [color=black,mark=x,dashed] table[x=n,y=alea-s,col sep=comma] {Data/message-complexity-sim.csv};
                \addlegendentry{$\sigma=1$}
                
                \addplot
                [color=black,mark=triangle,dashed] table[x=n,y=alea-s2,col sep=comma] {Data/message-complexity-sim.csv};
                \addlegendentry{$\sigma=2$}
                
                \addplot
                [color=black,mark=square,dashed] table[x=n,y=alea-sn,col sep=comma] {Data/message-complexity-sim.csv};
                \addlegendentry{$\sigma=N$}
            \end{axis}
        \end{tikzpicture}
\end{center}\vspace{-2em}
\caption{Messages generated by \ProtocolX\ per replica and per batch delivered, for different values of $\sigma$.}
\label{fig:eval:sigma}
\end{figure}

A follow-up question is how does this translate, in practice, in terms of relative message complexity when compared to the state of the art. To this end, we measured experimentally the  message complexity per replica and per request batch of \ProtocolX\ in comparison with HoneyBadgerBFT and Dumbo1/2. 
As we can see in Figure~\ref{fig:eval:sigma-all}, this metric grows exponentially for the protocols based on ACS, but stays linear for \ProtocolX. This is expected since, in an ACS framework, every replica must reliably broadcast its proposals for that batch, which incurs $\mathcal{O}(N^2)$ messages per replica. In contrast, in \ProtocolX\ the broadcast primitive used has a message complexity of $\mathcal{O}(N)$.

\begin{table}[htb]
\centering
\normalsize
    \caption{Comparison of atomic broadcast protocols, assuming $\sigma$ is constant.}
    \label{table:alea-complexity-optimal}
{
    \begin{tabular}{ | c | c | c | c | } \hline
    \textbf{Protocol} & Message                   & Communication                                 & Time \\ \hline
    HBBFT             & $\mathcal{O}(N^3)$        & $\mathcal{O}(N^2|m| + \lambda N^3\log{}N)$    & $\mathcal{O}(\log{}N)$ \\ \hline
    Dumbo1            & $\mathcal{O}(N^3)$        & $\mathcal{O}(N^2|m| + \lambda N^3\log{}N)$    & $\mathcal{O}(\log{}k)$ \\ \hline
    Dumbo2            & $\mathcal{O}(N^3)$        & $\mathcal{O}(N^2|m| + \lambda N^3\log{}N)$    & $\mathcal{O}(1)$       \\ \hline
    \ProtocolX        & $\mathcal{O}(N^2)$        & $\mathcal{O}(N^2(|m|+\lambda))$               & $\mathcal{O}(1)$       \\ \hline
    \end{tabular}
    }
\end{table}

The previous experiments support our hypothesis that, despite being theoretically unbounded, under a wide-area deployment with substantial request load, the value of $\sigma$ does in fact approximate the optimal value of $1$.
In light of this, we present in Table~\ref{table:alea-complexity-optimal} the expected complexities of \ProtocolX, HoneyBadgerBFT and Dumbo1/2 when setting $\sigma=1$. As the table highlights, the gains are significant in terms of both message and communication complexity, allowing us to underline the importance of our proposal to the practicality of asynchronous BFT.

\subsection{Throughput}
Next, we measure the consequences of the previously studied protocol characteristics on their end to end performance, starting with the protocol throughput. Throughput is defined as the rate at which commands are serviced by the system, or, from the perspective of a Blockchain system, the number of transactions committed by unit of time.
In our experiments, we measured the throughput by launching multiple replicas executing the protocol at the maximum possible rate (i.e., always having pending client requests to process) and periodically registering the number of transactions committed during the last time interval.
Figure~\ref{fig:eval:batching}, compares the throughput of \ProtocolX, HoneyBadgerBFT and Dumbo1/2, for different system scales, as the batch size increases.
The results show that the positive slope of the \ProtocolX\ throughput curve is maintained for a wider range of batch sizes than with all the other protocols. This leads to a resulting peak throughput of \ProtocolX\ that is about one order of magnitude higher than the state of the art in asynchronous BFT.
This result is explained by the differences in communication complexity between the protocols, as presented in Table~\ref{table:alea-complexity-optimal}.
\begin{figure}[t]
\begin{center}
\begin{tikzpicture}
    \begin{axis}[
        width=.95\linewidth,
        grid = major,
        grid style = {dashed, gray!30},
        xmin = 0,
        xmode=log,
        ymode=log,
        xlabel = {Batch Size $(tx)$},
        ylabel = {Throughput $(tx/s)$},
        minor y tick num = 1,
        legend style={at={(axis cs:400,30000)}, anchor=north east}]
        
    \addplot
    [color=red,mark=*,smooth,thick] table[x=batch_size, y=avg, col sep=comma] {Data/Batching/alea-n4-tp.csv};
    \addlegendentry{\ProtocolX}
    
    \addplot
    [color=orange,mark=triangle,smooth,thick] table[x=batch_size, y=avg, col sep=comma] {Data/Batching/hb-n4-tp.csv};
    \addlegendentry{HBBFT}
    
    \addplot
    [color=green,mark=square,smooth,thick] table[x=batch_size, y=avg, col sep=comma] {Data/Batching/dumbo-n4-tp.csv};
    \addlegendentry{Dumbo1}
    
    \addplot
    [color=purple,mark=x,smooth,thick] table[x=batch_size, y=avg, col sep=comma] {Data/Batching/dumbo2-n4-tp.csv};
    \addlegendentry{Dumbo2}
    \end{axis}
\end{tikzpicture}
\end{center}\vspace{-2em}
\caption{Throughput with varying batch sizes.}
\label{fig:eval:batching}
\end{figure}
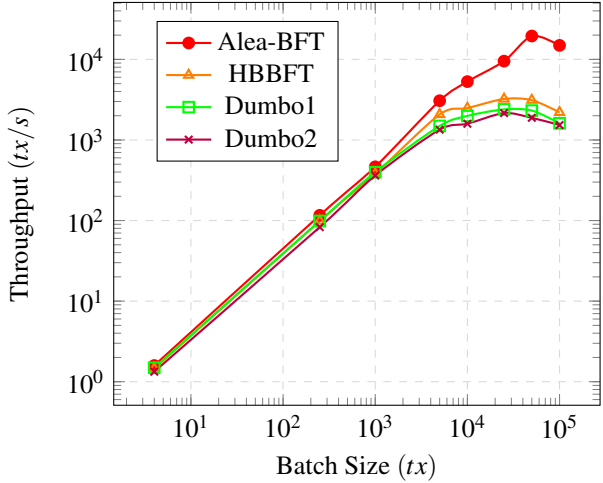

Next, we fix the batch size to a value of $5000$ transactions, and evaluate how the resulting throughput scales with the replica group size.
Note that this is a conservative experiment for \ProtocolX\ since our protocol would allow for further throughput gains than the other protocols by increasing the batch size, as previously shown.

\begin{figure}[h]
\begin{center}
\begin{tikzpicture}
    \begin{axis} [
        ybar,
        bar width=.1cm,
        width=0.95\linewidth,
        ymajorgrids = true,
        symbolic x coords={4,8,16,32,64},
        xtick=data,
        ymin=0,
        xlabel={Number of Replicas},
        ylabel={Throughput $(tx/s)$},
        legend style={
            at={(1,1)},
            anchor=north east,
            legend columns=1
        },
    ]
                
    \addplot table[x=n,y=alea,col sep=comma]{Data/tp-final.csv};
                
    \addplot [pattern=grid, pattern color=green] table[x=n,y=dumbo2,col sep=comma]{Data/tp-final.csv};
                
    \addplot [pattern=north east lines, pattern color=red] table[x=n,y=dumbo,col sep=comma]{Data/tp-final.csv};
                
    \addplot [pattern=dots, pattern color=black] table[x=n,y=hb,col sep=comma]{Data/tp-final.csv};
                
    \legend{\ProtocolX, Dumbo2, Dumbo, HBBFT}
    \end{axis}
\end{tikzpicture}
\end{center}\vspace{-2em}
\caption{Protocol throughput as the number of replicas varies.}
\label{fig:eval:throughput}
\end{figure}
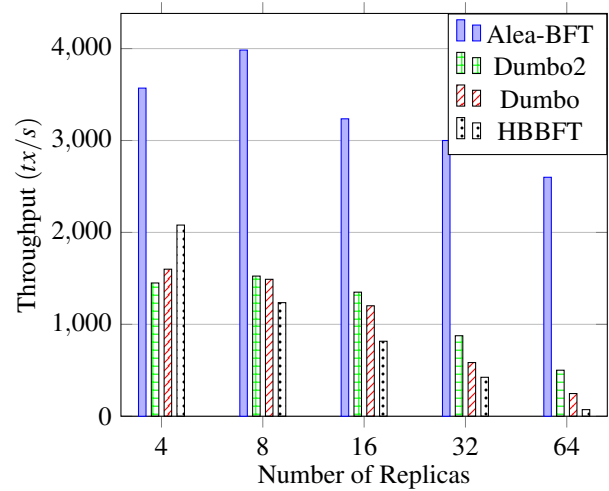

The results in Figure~\ref{fig:eval:throughput} show that \ProtocolX\ not only outperforms all other protocols for all system sizes, but this discrepancy in performance actually increases with the number of replicas in the system, thus showing better scalability. This is again expected since \ProtocolX\ presents lower message and communication complexities than its counterparts.
Another interesting remark is that for smaller values of $N$, HoneyBadgerBFT actually outperforms both Dumbo protocols despite being theoretically more expensive, showing that the benefit of reducing the number of ABA executions does not outweigh the overhead of the mechanisms required to achieve so for small system scales.

\subsection{Latency}
Next, we compare the latency of the various protocols, by measuring the time interval between the instant the first correct replica starts the protocol (i.e., selects the transaction from its pending buffer of client requests) until $(N-f)$ replicas deliver the result.
In HoneyBadgerBFT and Dumbo1/2 this corresponds to an instance of ACS plus the threshold decryption round, whereas in \ProtocolX\ it encompasses the full pipeline, including the period during which a transaction is waiting in the priority queues for the agreement component to select it for delivery.
\begin{figure}[t]
\begin{center}
\begin{tikzpicture}
    \begin{axis}[
        width=.95\linewidth,
        grid = major,
        grid style = {dashed, gray!30},
        xmin = 0,
        xlabel = {Number of Replicas},
        ylabel = {Basic Latency $(s)$},
        minor y tick num = 1,
        legend style={at={(axis cs:5,110)}, anchor=north west}
    ]

    \addplot
    [color=red,mark=*,smooth,thick] table[x=n,y=latency,col sep=comma] {Data/Latency/basic-lat-alea.csv};
    \addlegendentry{\ProtocolX}
                
    \addplot
    [color=orange,mark=triangle,smooth,thick] table[x=n,y=latency,col sep=comma] {Data/Latency/basic-lat-hb.csv};
    \addlegendentry{HBBFT}
                
    \addplot
    [color=green,mark=square,smooth,thick] table[x=n,y=latency,col sep=comma] {Data/Latency/basic-lat-dumbo.csv};
    \addlegendentry{Dumbo1}
                
    \addplot
    [color=purple,mark=x,smooth,thick] table[x=n,y=latency,col sep=comma] {Data/Latency/basic-lat-dumbo2.csv};
    \addlegendentry{Dumbo2}
    \end{axis}
\end{tikzpicture}
\end{center}\vspace{-2em}
\caption{Protocol latency as the number of replicas varies.}
\label{fig:eval:basic-latency}
\end{figure}
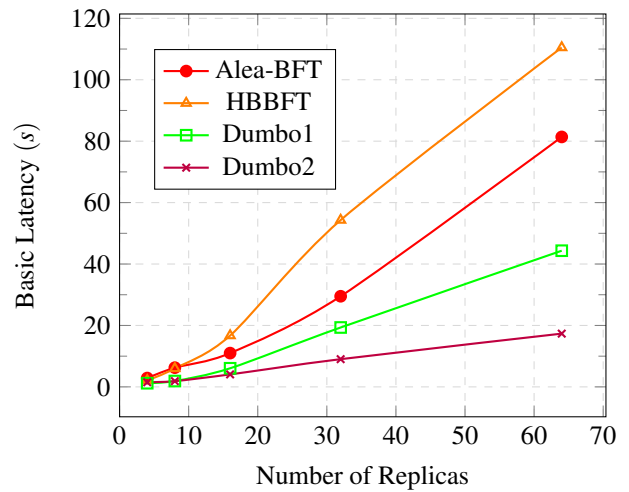
In Figure~\ref{fig:eval:basic-latency}, we examine the average latency of the protocols for different system sizes under no contention, i.e., having each node propose a single transaction at a time while no other proposals are being processed. As we can see, for small values of $N$, the basic latency of all four protocols is very similar. However, as the system size increases, we start to observe some differences, with the average latency of HoneyBadgerBFT increasing faster than the remaining protocols. This discrepancy can be explained by the latency overhead associated with running multiple ABA instances, a factor that is greatly reduced in both \ProtocolX\ and the Dumbo protocols. The average latency of \ProtocolX\ grows faster than that of Dumbo1 and Dumbo2.
This is because the latency of \ProtocolX\ is visibly influenced by the choice of replica that proposes the transaction, since the priority queues containing ordered proposals are traversed in a round robin manner, based on replica id. In particular, if the current leader is replica 0, then a proposal from a replica with a lower id will result in a lower latency measurement than if the proposal originated from a replica with a higher id. Note that, for fairness, the proposer was randomly selected and the results averaged across multiple runs.
\begin{figure}[t]
\begin{center}
\begin{tikzpicture}
    \begin{axis}[
        width=.95\linewidth,
        grid = major,
        grid style = {dashed, gray!30},
        xmin = 0,
        xmode=log,
        xlabel = {Throughput $(tx/s)$},
        ylabel = {Latency $(s)$},
        minor y tick num = 1,
        legend style={at={(0.05, 0.95)}, anchor=north west}
    ]
                
    \addplot
    [color=red,mark=*,smooth,thick] table[x=throughput,y=latency,col sep=comma] {Data/Ltp/latency-throughput-alea-32.csv};
    \addlegendentry{\ProtocolX}
                
    \addplot
    [color=orange,mark=triangle,smooth,thick] table[x=throughput,y=latency,col sep=comma] {Data/Ltp/latency-throughput-hb-32.csv};
    \addlegendentry{HBBFT}
                
    \addplot
    [color=green,mark=square,smooth,thick] table[x=throughput,y=latency,col sep=comma] {Data/Ltp/latency-throughput-dumbo-32.csv};
    \addlegendentry{Dumbo1}
                
    \addplot
    [color=purple,mark=x,smooth,thick] table[x=throughput,y=latency,col sep=comma] {Data/Ltp/latency-throughput-dumbo2-32.csv};
    \addlegendentry{Dumbo2}
    \end{axis}
\end{tikzpicture}
\end{center}\vspace{-2em}
\caption{Throughput vs.\ latency curves (n=32).}
\label{fig:eval:throughput-latency}
\end{figure}
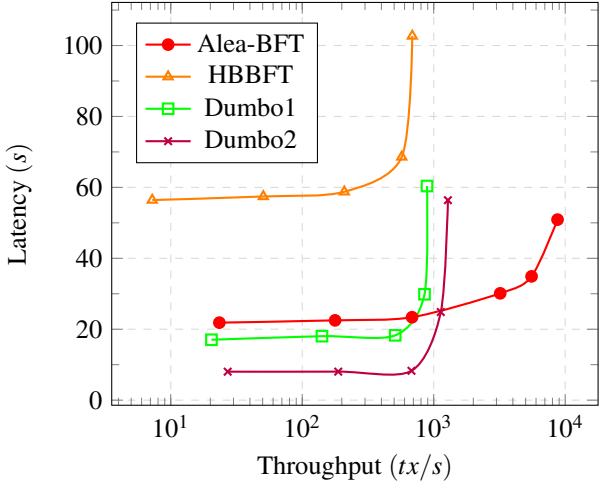
To gain an overall view of the tradeoff between latency and throughput for the various protocols, we show in Figure~\ref{fig:eval:throughput-latency} how the latency evolves as the system load increases for a medium system scale of $32$ replicas. Each point corresponds to measuring the throughput and latency of a given protocol for a fixed batch size, and the various points along a line correspond to varying the batch size and therefore increasing both latency and throughput as the batch size increases.
For all protocols, initially the latency stays relatively stable, only presenting small increases as the system load grows. However, as we reach the nominal capacity of each protocol, we see a very steep increase in latency as the system resources stop being able to keep up with the increase in system load. Note that \ProtocolX\ is able to sustain a stable latency for much higher system loads than all the other protocols, and this is due to its asymptotic complexity, which leads to a more optimized bandwidth usage.

\subsection{Performance under adversarial conditions}
Finally, we evaluate the the performance of \ProtocolX\ under adversarial conditions, particularly in the presence of a malicious network scheduler and faulty replica processes.

In the first experiment, we aim to understand how an adversary that controls the delivery of network messages can influence the value of $\sigma$, and consequently the performance of \ProtocolX. To this end, we conducted an experiment using a malicious scheduler that purposely delayed the delivery of VCBC instances by $N-f$ replicas, in order to artificially increase the value of $\sigma$ to a value that is closer to $N$. As illustrated in Figure~\ref{fig:eval:sigma-all}, even under adversarial network conditions, \ProtocolX\ still requires fewer message exchanges than HoneyBadgerBFT, although it requires more messages than both Dumbo protocols.
This is explained by the fact that, in a scenario where $\sigma = N$, both \ProtocolX\ and HoneyBadgerBFT require $N$ ABA executions per consensus instance, whereas Dumbo1 and 2 reduce this number to a small value $k$ (independent of  $N$) and a constant value, respectively.
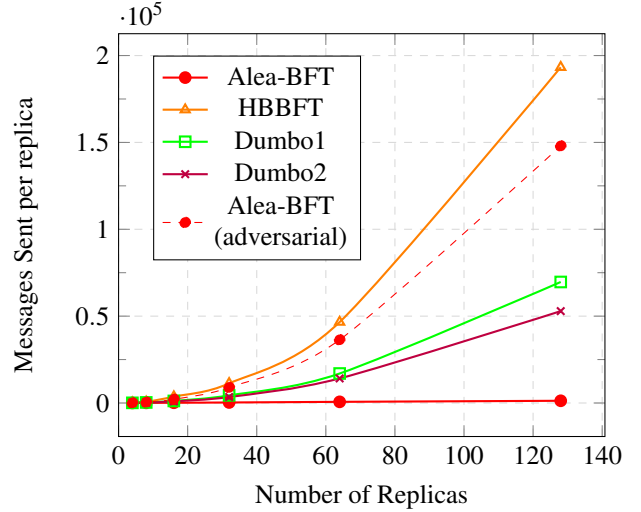
\begin{figure}[t]
\begin{center}
\begin{tikzpicture}
    \begin{axis}[
        width=.95\linewidth,
        grid = major,
        grid style = {dashed, gray!30},
        xmin = 0,
        xlabel = {Number of Replicas},
        ylabel = {Messages Sent per replica},
        minor y tick num = 1,
        legend style={at={(axis cs:10,200000)}, cells={align=center}, anchor=north west}
    ]
    
    \addplot
    [color=red,mark=*,smooth,thick] table[x=n,y=alea,col sep=comma] {Data/message-complexity.csv};
    \addlegendentry{\ProtocolX}
                
    \addplot
    [color=orange,mark=triangle,smooth,thick] table[x=n,y=hb,col sep=comma] {Data/message-complexity.csv};
    \addlegendentry{HBBFT}
                
    \addplot
    [color=green,mark=square,smooth,thick] table[x=n,y=dumbo,col sep=comma] {Data/message-complexity.csv};
    \addlegendentry{Dumbo1}
                
    \addplot
    [color=purple,mark=x,smooth,thick] table[x=n,y=dumbo2,col sep=comma] {Data/message-complexity.csv};
    \addlegendentry{Dumbo2}
    
    \addplot
    [color=red,mark=*,smooth,dashed] table[x=n,y=alea-n,col sep=comma] {Data/message-complexity.csv};
    \addlegendentry{\ProtocolX\ \\ (adversarial)}
    \end{axis}
\end{tikzpicture}
\end{center}\vspace{-2em}
\caption{Messages generated per replica per batch delivered under an adversarial network scheduler.}
\label{fig:eval:sigma-all}
\end{figure}

Additionally, we compare the performance of \ProtocolX\ and HoneyBadgerBFT in three different replica fault scenarios: failure-free, crash failure and Byzantine.
In the crash failure scenario, we configure $f$ replicas to completely ignore all external events.
In the Byzantine fault scenario, we simulate the attack described in \Cref{sec:protocol:motivation}, in which an attacker purposely submits invalid proposals to be ordered that despite consuming resources do not count for the overall throughput of the protocol.
For all experiments, the system scale was set to 4 replicas, with $f=1$, and the batch size to $1,000$ transactions.
\begin{figure}[t]
    \begin{center}
        \begin{tikzpicture}
            \begin{axis} [
                ybar,
                bar width=.5cm,
                width=.95\linewidth,
                ymajorgrids = true,
                legend style={at={(1,1)}, anchor=north east,legend columns=-1},
                symbolic x coords={Alea-BFT, HBBFT},
                xtick=data,
                ylabel = {Throughput $(tx/s)$},
                ymin=0,
                scaled y ticks = false,
                enlarge x limits=1
                ]
                
                \addplot table[x=protocol,y=free,col sep=comma]{Data/faults.csv};
                \addplot [pattern=grid, pattern color=red] table[x=protocol,y=crash,col sep=comma]{Data/faults.csv};
                \addplot [pattern=dots, pattern color=black] table[x=protocol,y=byzantine,col sep=comma]{Data/faults.csv};
                \legend{Fault-free, Crash, Byz.}
            \end{axis}
        \end{tikzpicture}
    \end{center}\vspace{-2em}
    \caption{Throughput of \ProtocolX\ and HoneyBadgerBFT for different fault scenarios.}
    \label{fig:eval:faults}
\end{figure}
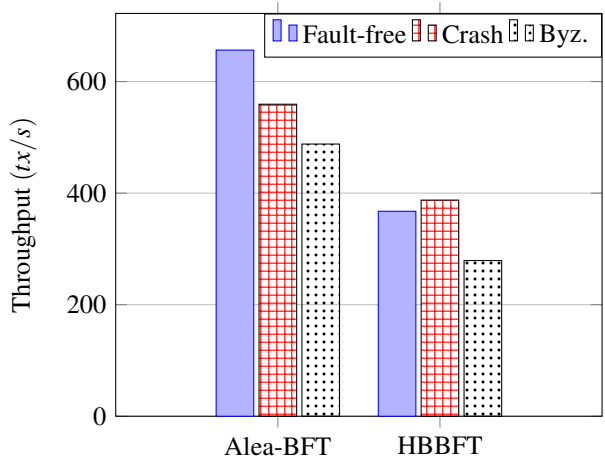
For the Byzantine experiments, both protocols show lower throughput values than in the failure-free scenario. For instance, the performance of HoneyBadgerBFT and \ProtocolX\  decrease by about 30\% and 24\%, respectively.
These results are directly related with how often the proposal from a Byzantine replica is delivered by the protocol.
In HoneyBadgerBFT, this corresponds to the threshold of proposals originating from Byzantine replicas that are included in the final output vector of ACS, which, under a fair network scheduler, follows a hypergeometric distribution over a population of size $N$ and $N-f$ draws. In contrast, under an adversarial network scheduler, the output of ACS will always contain $f$ Byzantine proposals, since it can be influenced by the delivery order of messages.
In \ProtocolX, the performance decay that occurs under the fault scenarios presented above follows directly from the queue selection function. In particular, the round robin strategy used in our implementation results in faulty replicas being selected periodically.
A possible improvement over our baseline implementation could explore the impact of using adaptive queue selection functions. Additionally, for the Byzantine fault scenario, it would be possible for \ProtocolX\ to ignore invalid proposals during the broadcast phase. In contrast, ACS based protocols cannot follow this approach since the commit of which proposals to include happens before the threshold decryption round revealing its contents.

\section{Conclusion}\label{sec:conclusion}

In this paper, we presented \ProtocolX, the first practical asynchronous BFT protocol to scale to large networks. \ProtocolX\ follows a principled design that splits the execution across two stages, and brings the idea of having a designated single replica drive the execution of the more expensive broadcast phase~--~to avoid redundant communication and computation that negatively affects the scalability of previous protocols~--~followed by a less expensive binary agreement phase. Our experimental evaluation shows that \ProtocolX\ is significantly more scalable than state of the art protocols, paving the way to the practical adoption of asynchronous BFT at scale.


\bibliographystyle{plain}
\bibliography{alea}

\begin{appendices}
\section{Correctness Proof}\label{appendix:correctness}
In this section, we present a correctness proof for the \ProtocolX\ protocol.
It is organized in four subsections, one for each property of the algorithm - validity, agreement, integrity and total order.
All proofs are written according to the structured proof format by Lamport~\cite{lamport_proof} and assume that $N=3f+1$.
For convenience, we start by presenting some auxiliary definitions, which will be in effect for all the proofs presented in this section.

\begin{definition}\label{definition:consensus-holds}
If a correct process $P_i$ enters round $r$, let $S_{r}^{(i)}$ denote the set of requests delivered by $P_i$ by the time it enters $r$. We say that consensus holds on entry to round $r$ if for any two correct processes $P_i$ and $P_j$ that enter $r$, then $S_{r}^{(i)} = S_{r}^{(j)}$. If consensus holds on entry to round $r$, and any correct process does enter round $r$, we denote by $S_r$ the common value of the $S_{r}^{(*)}$.
\end{definition}

\noindent
It is trivial to see that consensus always holds for round $0$, considering $S_0$ starts as empty. For this reason by proving the agreement and total order properties for a round where consensus holds, we can recursively generalise the results to every other round, using $r=0$ as a starting point.

\begin{definition}\label{definition:prepared}
A correct process $P_i$ is said to be \texttt{Prepared $(i, s)$} for $m$ on round $r$ when it has delivered all VCBC $(i, s')$ instances tagged with a priority $s' \leq s$, where $s$ corresponds to the lowest priority assigned by $P_i$ to a message $m \notin S_r^{(i)}$. The value of $m$ corresponds to the result of VCBC $(i, s)$.
\end{definition}

\noindent
Note that stating that a process is \texttt{Prepared$(i, s)$} for $m$, is the equivalent of invoking the \texttt{Peek} operation over the priority queue $Q_i$ and obtaining the value $m \neq \bot$, located in its head slot of index $s$.

\subsection{Agreement}
The proof of agreement is structured as Lemma~\ref{lemma:prep-consistency} and Theorem~\ref{theorem:agreement}.
Lemma~\ref{lemma:prep-consistency} is used to prove agreement on the value of \texttt{Prepared$(i, -)$} for any agreement round where consensus holds at entry and Theorem~\ref{theorem:agreement} uses the previous lemma as support to conclude the reasoning.

\begin{lemma}\label{lemma:prep-consistency}
If some correct process has \texttt{Prepared$(i, -)$} for value $m$ on round $r$, then no other correct process can have \texttt{Prepared$(i, -)$} for $m'$, on the same agreement round, where consensus holds at entry, such that $m \neq m'$.
\end{lemma}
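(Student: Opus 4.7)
The plan is to proceed by contradiction. Suppose two correct processes $P_j$ and $P_k$ are both Prepared$(i,-)$ on round $r$: say $P_j$ is Prepared$(i,s_j)$ for $m$ and $P_k$ is Prepared$(i,s_k)$ for $m'$, with $m \neq m'$. Since consensus holds on entry to round $r$, both processes share the same delivered set, namely $S_r^{(j)} = S_r^{(k)} = S_r$. Without loss of generality, assume $s_j \leq s_k$. The goal is to derive $m = m'$ and thus obtain the contradiction.

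The first key step is to invoke the definition of Prepared to note that $P_k$, being Prepared$(i,s_k)$, has delivered every VCBC$(i,s')$ instance with $s' \leq s_k$, and in particular the instance VCBC$(i,s_j)$ whose output is stored in slot $s_j$ of $Q_i$ at $P_k$. By the \textbf{consistency} property of VCBC (which holds even if the sender $P_i$ is Byzantine), the value $P_k$ obtains from VCBC$(i,s_j)$ must coincide with the value $P_j$ obtained, which is $m$.

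The second step is to pin down $s_j = s_k$. By $P_j$'s Prepared condition we know $m \notin S_r$, so $P_k$ now has slot $s_j$ of $Q_i$ occupied by a value outside $S_r$. Since $s_k$ is, by definition, the \emph{lowest} priority slot of $Q_i$ whose value lies outside $S_r$ (from $P_k$'s view), we must have $s_k \leq s_j$. Combined with the assumption $s_j \leq s_k$, this forces $s_j = s_k$, and hence $m' = m$ by another application of VCBC consistency, contradicting $m \neq m'$.

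The main obstacle I anticipate is being precise about the Prepared definition, since it simultaneously constrains (i) that all lower-indexed VCBC instances have terminated and (ii) that the designated slot is the lowest one whose value has not yet been delivered in the current round. One has to verify that these two conditions, together with a shared $S_r$ across processes (which is exactly what consensus-at-entry gives us), force the two head pointers to coincide; the rest is a routine application of VCBC consistency. A subtle point worth double checking is that we do not need the sender $P_i$ to be correct for this argument: VCBC consistency suffices, which is precisely the reason the protocol chose VCBC over a plain best-effort broadcast.
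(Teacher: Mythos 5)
Your proof is correct and follows essentially the same route as the paper's: establish that the two head indices coincide ($s_j = s_k$) using the shared delivered set $S_r$ from consensus-at-entry, then conclude $m = m'$ by the consistency property of VCBC. In fact, your argument for $s_j = s_k$ (bounding in both directions via WLOG plus the minimality in the \texttt{Prepared} definition) spells out a step the paper's proof only asserts tersely by citing the assumptions and Definition~\ref{definition:prepared}, and your observation that sender correctness is not needed is exactly right.
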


\noindent
For this proof we assume that two correct processes $P$ and $P'$ are \texttt{Prepared$(i, -)$} for values $m$ and $m'$ respectively, on round $r$ where consensus holds at entry, and proceed to demonstrate that $m \neq m'$ leads to a contradiction due to a violation the consistency property of VCBC.
\vspace{1ex}

\begin{proof}\label{proof:prep-consistency}
    \beforePfSpace{1ex, 1ex}
    \afterPfSpace{1ex, 2ex}
    \interStepSpace{2ex}
    \pflongnumbers
    
    \step{1}{
        \assume{
            \begin{pfenum}
                \item Both $P$ and $P'$ are correct processes. \label{proof:prep-consistency:assump:1}
                \item Process $P$ is \texttt{Prepared$(i, s)$} for $m$ on round $r$. \label{proof:prep-consistency:assump:2}
                \item Process $P'$ is \texttt{Prepared$(i, s')$} for $m'$ on round $r$. \label{proof:prep-consistency:assump:3}
                \item Consensus holds for $r$. \label{proof:prep-consistency:assump:4}
            \end{pfenum}
        }
        \prove{$m = m'$.}
    }
    
    \step{2}{
        Process $P$ delivered $m$ for VCBC $(i, s)$.
    }
    \begin{proof}
		\pf\
		This follows directly from assumptions \ref{proof:prep-consistency:assump:1} and \ref{proof:prep-consistency:assump:2}, and the definition of the \texttt{Prepared} predicate presented in Definition~\ref{definition:prepared}.
	\end{proof}
	
	\step{3}{
	    Process $P'$ delivered $m'$ for VCBC $(i, s')$.
	}
	\begin{proof}
		\pf\ The same argument of \stepref{2} applies here.
	\end{proof}
    
    \step{4}{
	    $s = s'$.
	}
	\begin{proof}
		\pf\ By assumptions \ref{proof:prep-consistency:assump:2}-\ref{proof:prep-consistency:assump:4} and Definition~\ref{definition:prepared}.
	\end{proof}
	
	\qedstep
	\begin{proof}
		\pf\ Step \stepref{4} implies that, if $m \neq m'$, then the correct processes $P$ and $P'$ must have delivered different values for the same VCBC instance tagged with $(i, s)$. This contradicts the consistency property of VCBC proving the lemma.
	\end{proof}
\end{proof}

\begin{theorem}[Agreement]\label{theorem:agreement}
If a correct process delivers a message $m$, then all correct processes eventually deliver $m$.
\end{theorem}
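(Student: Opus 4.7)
The plan is to reduce Agreement to properties of the two building blocks (VCBC and ABA) together with Lemma~\ref{lemma:prep-consistency}, by induction on the round number. Assume a correct process $P_i$ delivers $m$. Inspecting the code, the only place where \textproc{ac-deliver}$(m)$ is invoked is inside the agreement loop, so $P_i$ must have delivered $m$ in some specific round $r$, which implies that the ABA instance tagged $(r)$ decided $1$ at $P_i$, and that the head slot of the queue selected by $F(r)$ at $P_i$ contained $m$ at the moment of delivery, i.e., $P_i$ was \texttt{Prepared}$(F(r),-)$ for $m$.

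Next I will show by induction on $r'$ that every correct process eventually enters round $r'$ and that consensus holds on entry. The base case $r'=0$ is immediate since $S_0 = \emptyset$. For the inductive step, assume consensus holds at entry to round $r'$ and that every correct process enters $r'$. The termination property of ABA guarantees that every correct process eventually exits round $r'$; by ABA agreement, they all decide the same bit $b$; and nothing in the agreement loop can block indefinitely once $b$ is decided, as I argue in the next paragraph. Hence all correct processes enter round $r'+1$, and they do so having applied the same update to $S$, preserving consensus on entry.

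The crux is showing that, when $b=1$ in some round $r'$, every correct process eventually calls \textproc{ac-deliver} on the \emph{same} value, so that delivery actually takes place and is consistent. By ABA validity, some correct process $P_k$ must have voted $1$, which means $P_k$ was \texttt{Prepared}$(F(r'),s)$ for some value $v$; applying Lemma~\ref{lemma:prep-consistency}, any correct process that has \texttt{Peek}$()\neq\bot$ at this slot must hold the same $v$. A correct process $P_j$ that voted $0$ broadcasts \fillgap{F(r')}{s}, and $P_k$ (whose head is at least $s$) replies with a \texttt{FILLER} carrying the VCBC verifiable proof $M$ for $v$; by VCBC verifiability, $P_j$ delivers $v$ locally upon processing $M$, which by the second upon rule of the broadcast component enqueues $v$ in slot $s$ of queue $F(r')$, unblocking the wait at line~\ref{algo:ac:start:loop:block}. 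Thus every correct process eventually obtains $v$ and calls \textproc{ac-deliver}$(v)$, and the specific $P_i$ that delivered $m$ in round $r$ forces $v=m$ for $r'=r$, completing the argument.

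The main obstacle is the recovery argument: one must justify that a correct voter of $1$ exists and remains able to answer \texttt{FILL-GAP} requests with a valid VCBC proof, and that waiting on line~\ref{algo:ac:start:loop:block} cannot deadlock, because in principle the sole $1$-voter might be faulty in isolation. This is handled by the combination of ABA validity (at least one correct process voted $1$, hence holds the proposal and can generate the verifiable message $M$), the reliability of point-to-point channels assumed in the system model (the \texttt{FILLER} eventually arrives), and VCBC succinctness and verifiability (the proof suffices for immediate local delivery). Once this recovery invariant is established, chaining it with the inductive schedule argument and Lemma~\ref{lemma:prep-consistency} yields Agreement in a structured Lamport-style proof mirroring the pattern already used for Lemma~\ref{lemma:prep-consistency}.
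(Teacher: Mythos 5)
Your proposal is correct and follows essentially the same route as the paper's proof: ABA agreement and termination give a common decision of $1$ in round $r$, a case split on whether a process is already \texttt{Prepared} handles the two outcomes, ABA validity supplies a correct $1$-voter whose VCBC verifiable message answers the \texttt{FILL-GAP}/\texttt{FILLER} recovery, and Lemma~\ref{lemma:prep-consistency} forces the delivered value to equal $m$. The only difference is that you make explicit the induction showing that all correct processes reach round $r$ with consensus holding on entry, which the paper instead folds into the standing assumption justified by the remark following Definition~\ref{definition:consensus-holds}.
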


\noindent
Let us assume, without loss of generality, that some correct process $P$ has delivered a message $m$, during a given round $r$ for which consensus holds at entry. Based on the assumption, it suffices to demonstrate, without loss of generality, that a second correct process $P'$, which is yet to deliver $m$, eventually does so.
\vspace{1ex}

\begin{proof}\label{proof:agreement}
    \beforePfSpace{1ex, 1ex}
    \afterPfSpace{1ex, 2ex}
    \interStepSpace{2ex}
    \pflongnumbers
    
    \step{1}{
        \assume{
            \begin{pfenum}
                \item Both $P$ and $P'$ are correct processes.
                \label{proof:agreement:a1}
                
                \item Process $P$ invoked \texttt{DELIVER$(m)$} for round $r$.
                \label{proof:agreement:a2}
                
                \item Consensus holds for $r$.
                \label{proof:agreement:a3}
            \end{pfenum}
        }
        \prove {Process $P'$ invokes \texttt{DELIVER$(m)$}.}
    }
    
    \step{2}{
        Process $P'$ decides $1$ for ABA $(r)$.
    }
    \begin{proof}
        \step{2.1}{Process $P$ decided $1$ for ABA $(r)$.}
        \begin{proof}
            \pf\ By assumptions \ref{proof:agreement:a1} and \ref{proof:agreement:a2}, that $P$ is correct and follows the protocol rules, therefore only delivering a message for any round $r$ if the corresponding ABA execution decided for $1$.
        \end{proof}
        
        \qedstep
        \begin{proof}
            \pf\ By \stepref{2.1} and the agreement and termination properties of ABA.
        \end{proof}
    \end{proof}
    
    We now have two cases to consider. One where process $P'$ is \texttt{Prepared($F(r), s$)} for some value $m'$ at the beginning of round $r$ and one where it hasn't prepared yet. Note that $F(r)$ corresponds to the results of the queue mapping function for round $r$.
    \vspace{1ex}
    
    \step{3}{
        \case{Process $P'$ is \texttt{Prepared($F(r), s$)} for $m'$.}
        \begin{proof}
            \step{3.1}{
                $m = m'$.
            }
            \begin{proof}
                \pf\ By assumption \stepref{3} and Lemma~\ref{lemma:prep-consistency}.
            \end{proof}
            
            \qedstep
            \begin{proof}
                \pf\ By \stepref{3.1}, and the assumption that $P'$ is correct, process $P'$ delivers $m$ for round $r$.
            \end{proof}
        \end{proof}
    }
    
    \step{4}{
        \case{Process $P'$ is not \texttt{Prepared($F(r), s$)}.}
        \begin{proof}
            \step{4.0}{
                Process $P'$ broadcasts a $\langle \texttt{FILL-GAP}, F(r), - \rangle$ request.
            }
            \begin{proof}
                \pf\ By \stepref{2} and \stepref{4}. Any correct process that hasn't \texttt{Prepared($F(r), s$)} upon a ABA $(r)$ decision for $1$ initiates a recovery sub routine, which starts with the broadcast of a \texttt{FILL-GAP} request for the queue $F(r)$.
            \end{proof}
            
            \step{4.1}{
                Process $P'$ receives a valid $\langle \texttt{FILLER}, - \rangle$ reply.
            }
            \begin{proof}
                \step{4.1.1}{
                    At least one correct process $P_c$ input $1$ to ABA $(r)$.
                }
                \begin{proof}
                    \pf\ By \stepref{2} and the validity property of ABA.
                \end{proof}
                
                \step{4.1.2}{
                    Process $P_c$ was \texttt{Prepared($F(r), s$)} at the start of round $r$.
                }
                \begin{proof}
                    \pf\ By \stepref{4.1.1} and the assumption that $P_c$ is correct.
                \end{proof}
                
                \qedstep
                \begin{proof}
                    \pf\
                    By \stepref{4.1.2} and \cref{definition:prepared}, the correct process $P_c$ must have VCBC delivered all proposals from $P_{F(r)}$ tagged with a priority value $s' \leq s$.
                    Therefore, by the verifiability of VCBC, process $P_c$, can produce a \texttt{FILLER} message that completes all VCBC instances whose result $P'$ is unaware.
                \end{proof}
            \end{proof}
            
            \step{4.2}{
                Process $P'$ becomes \texttt{Prepared$(F(r), s)$} for $m'$.
            }
            \begin{proof}
                \pf\ By \stepref{4.1} and the verifiability property of VCBC.
            \end{proof}
            
            \qedstep
            \begin{proof}
                \pf\ By \stepref{4.2}, the same argument of \stepref{3} applies here.
            \end{proof}

        \end{proof}
    }
\end{proof}

\subsection{Integrity}
\begin{theorem}[Integrity]\label{theorem:integrity}
Every correct process delivers any message $m$ at most once.
\end{theorem}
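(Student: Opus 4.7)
The plan is to argue directly from the code of the \textproc{ac-deliver} procedure in Algorithm~\ref{algo:alea:ac}, which is the sole place in the protocol where a correct process outputs (delivers) a message to the application layer (line~\ref{algo:ac:deliver:app}). The key observation is that \textproc{ac-deliver} guards every output by the explicit predicate $m \notin S_i$, and immediately after outputting $m$ it inserts $m$ into $S_i$. Hence the proof reduces to showing that $S_i$ is monotonically growing, so that once $m$ has been output the guard can never again be satisfied for that same $m$.

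Concretely, I would structure the argument as follows. First, I would assume for contradiction that some correct process $P_i$ outputs the same message $m$ twice, say during invocations of \textproc{ac-deliver} in rounds $r_1 \leq r_2$. Second, I would observe by inspection of Algorithms~\ref{algo:alea:start}, \ref{algo:alea:bc}, and~\ref{algo:alea:ac} that the only writes to $S_i$ are (i)~the initialization $S_i \gets \emptyset$ in \textproc{start}, which happens once before \textproc{AC-START} is invoked, and (ii)~the assignment $S_i \gets S_i \cup \{m\}$ inside \textproc{ac-deliver} immediately before the output step. In particular, no statement in the protocol removes elements from $S_i$, so $S_i$ is monotone non-decreasing after the single initialization.

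Third, I would use this monotonicity: because $P_i$ output $m$ during round $r_1$, the update on the line preceding the output ensures $m \in S_i$ at the end of round $r_1$ and hence at the start of round $r_2$. Fourth, when round $r_2$ processes the same $m$ inside the inner \textbf{for each} loop of \textproc{ac-deliver}, the guard $m \notin S_i$ evaluates to false, so the \textbf{output} statement is skipped. This contradicts the assumption that $P_i$ output $m$ in round $r_2$, proving the theorem.

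I do not expect any technical obstacle here: unlike the agreement proof, integrity is enforced locally and syntactically by the $m \notin S_i$ check, so the proof is essentially a careful reading of the pseudocode. The only point requiring mild care is the explicit enumeration of all writes to $S_i$ to justify monotonicity, and noting that the \textproc{dequeue} calls on the priority queues (which do remove elements) operate on the queues $\texttt{queues}_i$, not on $S_i$, so they do not threaten the monotonicity argument.
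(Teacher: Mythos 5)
Your proof is correct, but it takes a different route from the paper's. The paper argues through its \texttt{Prepared} predicate and the round structure: after $P_i$ delivers $m$ in round $r$, the set $S^{(i)}_{r+1}$ contains $m$, so by Definition~\ref{definition:prepared} $P_i$ can never again be \texttt{Prepared} for $m$ in any round $r' > r$, and since being prepared is a prerequisite for delivery, $m$ cannot be delivered again. You instead bypass the \texttt{Prepared} machinery entirely and argue syntactically from the last line of defence: the guard $m \notin S_i$ around the single \textbf{output} statement in \textproc{ac-deliver}, combined with the monotonicity of $S_i$ (only ever written by the initialization and by $S_i \gets S_i \cup \{m\}$). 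Your version is more elementary and arguably more robust: it directly covers the case where the same client message $m$ is later re-proposed inside a \emph{different} batch by a different replica, a case the paper's \texttt{Prepared}-based argument handles only implicitly (since \texttt{Prepared} is defined over batches/VCBC payloads while $S_i$ tracks individual messages). What the paper's route buys is consistency with the rest of the appendix, where \texttt{Prepared} is the common currency of all four property proofs. One cosmetic note: in the pseudocode the insertion $S_i \gets S_i \cup \{m\}$ precedes the \textbf{output}, not the other way around as you state, but this ordering is immaterial to your argument.
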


\noindent
For this proof we assume, without loss of generality, that a correct process $P_i$ has delivered a message $m$ for a given agreement round $r$, and proceed to demonstrate that it cannot deliver $m$ for any subsequent round due to its inability to become \texttt{Prepared$(-, -)$} for $m$, which is a prerequisite for delivery according to the protocol spec.

\begin{proof}\label{proof:integrity}
    \beforePfSpace{1ex, 1ex}
    \afterPfSpace{1ex, 2ex}
    \interStepSpace{2ex}
    \pflongnumbers

    \step{1}{
        \assume{
            \begin{pfenum}
                \item Process $P_i$ is correct.
                \item Process $P_i$ invoked \texttt{DELIVER$(m)$} for round $r$.
            \end{pfenum}
        }
        \prove {Process $P_i$ cannot deliver $m$ for any round $r' > r$.}
    }
    
    \step{2}{
        $S_{r+1}^{(i)} = S_{r}^{(i)} + \{m\}$.
    }
    \begin{proof}
        \pf\ By assumptions 1 and 2, process $P_i$ is correct and has delivered $m$ for round $r$. Therefore, updating its delivered set for the next round to include $m$.
    \end{proof}
    
    \step{3}{
        Process $P_i$ cannot have \texttt{Prepared$(F(r'), -)$} for value $m$, for any round $r' > r$.
    }
    \begin{proof}
        \pf\ By \stepref{2} and Definition~\ref{definition:prepared}, it is impossible for a correct process $P_i$ to prepare for any value $m$ during $r'$ if $m \in S_{r'}^{(i)}$.
        
    \end{proof}
    
    \qedstep
    \begin{proof}
        \pf\
        To deliver $m$ for a round $r'$, the correct process $P_i$ must have \texttt{Prepare$(F(r'), -)$} for the value $m$ during $r'$.
        By \stepref{3}, this is impossible if $P_i$ has delivered $m$ for a prior round since $m \in S_{r'}^{(i)}$, thereby proving the theorem.
    \end{proof}
\end{proof}

\subsection{Validity}
The proof of validity is structured as Lemma~\ref{lemma:prepared-term} and Theorem~\ref{theorem:validity}.
Lemma~\ref{lemma:prepared-term} formalizes an upper bound on the number of agreement rounds required to deliver a message when certain preconditions are met. Theorem~\ref{theorem:validity} builds upon this lemma to conclude the proof.

\begin{lemma}\label{lemma:prepared-term}
If $2f+1$ correct processes have \texttt{Prepared$(i, s)$} for value $m$ by the time they enter round $r$, then $m$ is guaranteed to be delivered by the next round $r' \geq r$, such that $F(r') = i$.
\end{lemma}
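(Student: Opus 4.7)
The plan is to let $r'$ denote the smallest round with $r' \geq r$ and $F(r') = i$ (whose existence is guaranteed by the fairness of the queue mapping function $F$) and then proceed by case analysis on whether $m$ is delivered before $r'$. If some correct process delivers $m$ at a round $r''$ with $r \leq r'' < r'$, then by \Cref{theorem:agreement} every correct process eventually delivers $m$ and the lemma holds trivially, so I would assume otherwise.

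Under this assumption, I would first argue that each of the $2f+1$ correct processes that has \texttt{Prepared}$(i, s)$ for $m$ upon entering round $r$ still has \texttt{Prepared}$(i, s)$ for $m$ upon entering round $r'$. A correct process only removes $m$ from $Q_i$ either inside \textproc{ac-deliver}$(m)$ or through the second branch of the \texttt{VCBC}-output upon rule (triggered when $m \in S_i$); both events require $m$ to have been delivered in a previous round, contradicting the case assumption. Consequently, at the start of round $r'$ each of these $2f+1$ processes still sees $m$ in the head slot $s$ of $Q_i$ and therefore inputs $1$ to \texttt{ABA}$(r')$.

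Once \texttt{ABA}$(r')$ decides $1$ the remainder is immediate: every correct process that is not yet \texttt{Prepared}$(i, s)$ for $m$ broadcasts a \fillgap{i}{s}, and at least one correct $1$-voter replies with a valid \texttt{FILLER} (by the verifiability of \texttt{VCBC}, mirroring Case 2 of the proof of \Cref{theorem:agreement}); \Cref{lemma:prep-consistency} then guarantees that the recovered value is precisely $m$, so every correct process invokes \textproc{ac-deliver}$(m)$ during round $r'$.

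The main obstacle is justifying that \texttt{ABA}$(r')$ indeed decides $1$. The validity property stated earlier covers only the case in which \emph{all} correct processes propose the same bit, whereas up to $f$ correct processes could input $0$ here (namely those that have not yet \texttt{VCBC}-delivered $m$). The argument therefore needs to invoke the stronger biased-validity property of the Cobalt \texttt{ABA}: whenever at least $2f+1$ correct processes input the same bit, that bit is the unique value a correct process can decide. Combined with the randomized termination property, this forces the decision to be $1$ in round $r'$ and closes the proof.
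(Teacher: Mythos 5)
Your proposal follows the same route as the paper's proof: show that the $2f+1$ prepared processes input $1$ to \texttt{ABA}$(r')$, conclude that the decision is $1$, and then reuse the recovery argument (steps 3 and 4 of the proof of \Cref{theorem:agreement}) to get every correct process to deliver $m$. You are, however, more careful than the paper in two places, and both refinements have substance. First, the paper silently assumes the $2f+1$ processes are still \texttt{Prepared}$(i,s)$ when round $r'$ arrives; your case split (either $m$ was already delivered before $r'$, in which case \Cref{theorem:agreement} finishes the job, or neither removal path for $Q_i$ can have fired, since both presuppose delivery of $m$) closes that hole. Second, and more importantly, you are right that the validity property stated in \Cref{sec:basics} --- which requires \emph{all} correct processes to propose the same bit --- does not force a decision of $1$ here: when fewer than $f$ processes are actually faulty, up to $f$ correct processes may legitimately input $0$. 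The paper's own Step 3 simply cites ``the validity and termination properties of ABA,'' which suffices only in the extremal case of exactly $f$ faults. Your appeal to a biased-validity property of the concrete Cobalt instantiation is the natural patch, but note two things: (i) that property is not part of the abstract ABA interface the paper defines, so the lemma as proved rests on the implementation rather than the specification and the extra assumption should be stated explicitly; and (ii) for MMR-style protocols the BV-broadcast phase only guarantees that decided values were proposed by \emph{some} correct process, which does not immediately exclude a bit proposed by $f$ correct processes, so the claimed $2f{+}1$-majority bias needs its own short argument about the specific protocol rather than a bare citation. Flagging this dependency is a genuine improvement over the paper's argument, but the proof is not complete until that property is actually established for the ABA being used.
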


\begin{proof}
    \beforePfSpace{1ex, 1ex}
    \afterPfSpace{1ex, 2ex}
    \interStepSpace{2ex}
    \pflongnumbers
    
    \step{1}{
        \assume{
            \begin{pfenum}
                \item Process $P$ is correct.
                \item At least $2f+1$ correct processes are \texttt{Prepared$(i, s)$} for $m$ by round $r$.
                \item $r' \geq r : F(r') = i$.
                \item Consensus holds for $r$ and $r'$.
            \end{pfenum}
        }
        \prove {Process $P$ must have delivered $m$ by the end of round $r'$.}
    }
    
    \step{2}{
        At least $2f+1$ processes input 1 into ABA $(r')$.
    }
    \begin{proof}
        \pf\ This follows from assumption 2, that $2f+1$ processes have \texttt{Prepared$(i, s)$} on entry to round $r$, therefore proposing $1$ into the next ABA execution pertaining to queue $i$, which by assumption 3 corresponds to $r'$.
    \end{proof}
    
    \step{3}{
        Process $P$ decides $1$ for ABA $(r')$.
    }
    \begin{proof}
        \pf\ By \stepref{2}, and the validity and termination properties of ABA.
    \end{proof}
    
    \qedstep
    \begin{proof}
        \pf\ The same reasoning from the steps 3 and 4 from \Cref{theorem:agreement} can be applied here.
    \end{proof}
\end{proof}

\begin{theorem}[Validity]\label{theorem:validity}
If a correct process broadcasts a message $m$, then some correct process eventually delivers $m$.
\end{theorem}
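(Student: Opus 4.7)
The plan is to deduce validity by chaining VCBC's validity property with Lemma~\ref{lemma:prepared-term}, structured as a strong induction on the priority value assigned to the batch containing $m$. First, I would observe that if correct $P_j$ broadcasts $m$, then the first upon-rule of the broadcast component eventually inserts $m$ into a batch $b$ that is input to the VCBC instance tagged $(j, s_m)$ for some $s_m$; by the sequential priority assignment at a correct sender, every prior priority $s' < s_m$ has also been used for some VCBC instance initiated by $P_j$.

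Next, I would prove by strong induction on $s$ that, for every $0 \le s \le s_m$, the batch output by VCBC $(j, s)$ is eventually removed from $Q_j$ through a call to \textproc{ac-deliver} at every correct process. In the inductive step, the hypothesis gives that every batch at priority $s' < s$ has already been removed from $Q_j$ at every correct process. Combined with VCBC validity applied to the instance at priority $s$, this implies that eventually the head slot of $Q_j$ at every correct process coincides with $s$ (or has already advanced past it, in which case the claim is immediate). At that point, at least $2f+1$ correct processes are \texttt{Prepared}$(j, s)$ for the batch produced by VCBC $(j, s)$. Since $F$ is round-robin, infinitely many rounds $r'$ satisfy $F(r') = j$, so some such round is entered after the Prepared condition stabilises; Lemma~\ref{lemma:prepared-term} then guarantees that the batch is delivered by the end of that round.

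Specialising the induction to $s = s_m$ shows that batch $b$ containing $m$ is eventually passed to \textproc{ac-deliver}, which outputs $m$ to the application unless $m$ already belongs to $S_i$, in which case $m$ was delivered during an earlier round. Either case establishes that at least one correct process delivers $m$, which is exactly the validity statement.

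The main obstacle is reconciling two asynchronous timelines: the agreement loop must keep advancing through rounds, which follows from ABA termination even when the decision is $0$, while VCBC outputs must accumulate at enough correct processes for the Prepared condition to hold simultaneously at $2f+1$ of them before Lemma~\ref{lemma:prepared-term} is invoked. A subtler point worth handling carefully is the implicit assumption that batching at $P_j$ eventually flushes $m$ into a VCBC instance, which relies on the buffer reaching size $B$ through the arrival of further client commands; this is a standard caveat of batched atomic broadcast specifications and should be acknowledged explicitly in the setup of the proof.
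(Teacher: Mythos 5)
Your proof follows the same overall route as the paper's: apply VCBC validity to conclude that every correct process eventually VCBC-delivers the batch containing $m$, argue that enough correct processes become \texttt{Prepared} for it, and then invoke Lemma~\ref{lemma:prepared-term} together with the recurrence of rounds $r'$ with $F(r')=i$. The substantive difference is your strong induction on the priorities $s \le s_m$. The paper's Step 3 asserts that all correct processes become \texttt{Prepared}$(i,s)$ ``directly'' from VCBC delivery of the instance $(i,s)$, but Definition~\ref{definition:prepared} additionally requires that $s$ be the lowest priority assigned by $P_i$ to a message not yet in $S_r^{(i)}$ --- that is, every earlier batch from the same sender must already have been atomically delivered and dequeued so that the head of $Q_i$ has advanced to slot $s$. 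Your induction is exactly what is needed to discharge that hidden obligation, so your argument is, if anything, more complete than the paper's. Your two closing caveats are also well placed: the stabilisation of the \texttt{Prepared} condition at $2f+1$ correct processes before such a round $r'$ is entered does rely on the agreement loop continuing to advance (ABA termination for $0$-decisions, plus the recovery argument of Theorem~\ref{theorem:agreement} for $1$-decisions), and the assumption that $buf_i$ eventually reaches size $B$ so that $m$ is flushed into a VCBC instance is a genuine unstated hypothesis of the validity claim that the paper's proof silently skips over. I see no gaps in your argument.
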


\begin{proof}
    \beforePfSpace{1ex, 1ex}
    \afterPfSpace{1ex, 2ex}
    \interStepSpace{2ex}
    \pflongnumbers
    
    \step{1}{
        \assume{
            \begin{pfenum}
                \item Both $P$ and $P_i$ are correct processes.
                \item Process $P_i$ invoked \texttt{SEND$(m)$}.
            \end{pfenum}
        }
        \prove {Process $P$ invokes \texttt{DELIVER$(m)$}}
    }
    
    \step{2}{
        Every correct process delivers $m$ for VCBC $(i, s)$ .
    }
    \begin{proof}
        \pf\
        Let $s$ denote the priority value attributed by $P_i$ to $m$.
        Then, by the assumption that $P_i$ is correct and the validity property of VCBC all correct processes deliver $m$ tagged with $(i, s)$.
    \end{proof}
    
    \step{3}{
        All correct processes become \texttt{Prepared$(i, s)$} for $m$.
    }
    \begin{proof}
        \pf\ This follows directly from \stepref{2}.
    \end{proof}
    
    \qedstep
    \begin{proof}
        \pf\
        By \stepref{3}, which fulfills the pre-requirements for \Cref{lemma:prepared-term} to hold.
    \end{proof}
\end{proof}

\subsection{Total Order}
\begin{theorem}[Total Order]\label{theorem:total-order}
If two correct processes deliver messages $m$ and $m'$ then both processes deliver $m$ and $m'$ in the same order.
\end{theorem}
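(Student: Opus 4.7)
The plan is to prove Total Order by associating each delivery with the agreement round in which it occurred, and then observing that correct processes execute rounds in strictly increasing numerical order. First I would show by induction on $r$ that consensus holds on entry to every round $r$ in the sense of Definition~\ref{definition:consensus-holds}, so that the delivered set $S_r$ is well-defined as a common value across all correct processes that reach round $r$. The base case $r = 0$ is immediate since $S_0^{(i)} = \emptyset$ for every correct $P_i$ after the call to \textproc{START}. For the inductive step, assume consensus holds on entry to $r$; then by ABA agreement all correct processes that terminate ABA $(r)$ decide the same bit $b$. If $b = 0$, then $S_{r+1}^{(i)} = S_r$ for all correct $P_i$, and consensus trivially holds on entry to $r+1$. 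If $b = 1$, then by Theorem~\ref{theorem:agreement} every correct process eventually delivers the round-$r$ message, and Lemma~\ref{lemma:prep-consistency} forces the value delivered to be the same $m_r$ at every correct process; hence $S_{r+1}^{(i)} = S_r \cup \{m_r\}$ uniformly.

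Next I would define, for each message $m$ that a correct process $P$ delivers, the unique agreement round $r_m^{(P)}$ in which the delivery occurs. The invariant established above shows that $r_m^{(P)} = r_m^{(P')}$ for any two correct processes $P$ and $P'$ that deliver $m$, because the set difference $S_{r+1} \setminus S_r$ is globally determined for every round. Since the main loop in \textproc{AC-START} executes rounds in strictly increasing order of its local variable $r_i$ and blocks on each ABA before advancing, the relative order between the deliveries of $m$ and $m'$ at any correct process is determined entirely by the pair $(r_m, r_{m'})$, and this pair is identical at all correct processes.

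The remaining subtlety, and the main obstacle, is that a single agreement round can deliver a whole batch of messages: the $value$ returned by $Q.Peek()$ is the batch stored in one priority-queue slot, and \textproc{AC-DELIVER} iterates through $m \in value$. Two messages $m$ and $m'$ that belong to the same round's batch must still be delivered in the same relative order at every correct process. This follows because Lemma~\ref{lemma:prep-consistency} guarantees that the batch content is identical at every correct process that delivers it, and the protocol specifies a fixed deterministic iteration order over the elements of $value$ inside \textproc{AC-DELIVER}. With both the inter-round and intra-batch orderings pinned down, the theorem follows by a short case analysis on whether $m$ and $m'$ are delivered in the same round or in different rounds, using the integrity result of Theorem~\ref{theorem:integrity} to rule out repeated deliveries that could otherwise muddle the association of a message with a unique round.
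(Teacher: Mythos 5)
Your proposal is correct and follows essentially the same route as the paper: reduce Total Order to per-round agreement on the delivered value via Lemma~\ref{lemma:prep-consistency}, then use the fact that all correct processes traverse agreement rounds in the same increasing order. You are in fact more thorough than the paper's own proof, which only treats the case where both deliveries occur in the same round and leaves implicit both the inductive propagation of ``consensus holds on entry'' and the intra-batch delivery order that you spell out.
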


\noindent
For this proof, we assume without loss of generality that two correct processes $P$ and $P'$ have respectively delivered $m$ and $m'$, for round $r$ where consensus holds on entry. Based on this assumption it suffices to prove that $m = m'$, in order to demonstrate total order.

\begin{proof}
    \beforePfSpace{1ex, 1ex}
    \afterPfSpace{1ex, 2ex}
    \interStepSpace{2ex}
    \pflongnumbers
    
    \step{1}{
        \assume{
            \begin{pfenum}
                \item Both $P$ and $P'$ are correct processes.
                \item Process $P$ invoked \texttt{DELIVER$(m)$} for round $r$.
                \item Process $P$ invoked \texttt{DELIVER$(m')$} for round $r$.
                \item Consensus holds for $r$.
                \item $F(r) = i$.
            \end{pfenum}
        }
        \prove {$m = m'$.}
    }
    
    \step{2}{
        Process $P$ has \texttt{Prepared$(i, s)$} for $m$ during round $r$.
    }
    \begin{proof}
        \pf\ This follows from the assumption the $P$ is correct and has invoked \texttt{DELIVER$(m)$} for round $r$, since preparing for a value always precedes its delivery.
    \end{proof}
    
    \step{3}{
        Process $P'$ has \texttt{Prepared$(i, s)$} for $m'$ during round $r$.
    }
    \begin{proof}
        \pf\ The same argument as \stepref{2} applies here.
    \end{proof}
    
    \qedstep
    \begin{proof}
        \pf\ By \stepref{2} and \stepref{3}, and Lemma~\ref{lemma:prep-consistency}.
    \end{proof}
\end{proof}
\end{appendices}

\end{document}